\newtheorem{rem}{Remark}
\newtheorem{prop}{Proposition}
\begin{document}
%
% paper title
% Titles are generally capitalized except for words such as a, an, and, as,
% at, but, by, for, in, nor, of, on, or, the, to and up, which are usually
% not capitalized unless they are the first or last word of the title.
% Linebreaks \\ can be used within to get better formatting as desired.
% Do not put math or special symbols in the title.
\title{Interaction-Aware Labeled Multi-Bernoulli Filter}
%
%
% author names and IEEE memberships
% note positions of commas and nonbreaking spaces ( ~ ) LaTeX will not break
% a structure at a ~ so this keeps an author's name from being broken across
% two lines.
% use \thanks{} to gain access to the first footnote area
% a separate \thanks must be used for each paragraph as LaTeX2e's \thanks
% was not built to handle multiple paragraphs
%

\author{Nida~Ishtiaq,
        Amirali~Khodadadian~Gostar,
        Alireza~Bab-Hadiashar,~\IEEEmembership{Senior~Member,~IEEE} 
        and~Reza~Hoseinnezhad% <-this % stops a space
\thanks{All authors are with School of Engineering, RMIT University, Victoria 3083, Australia, e-mail: \{nida.ishtiaq,amirali.khodadadian,abh,rezah\}\@rmit.edu.au}% <-this % stops a space
\thanks{}% <-this % stops a space
\thanks{Manuscript received mm dd, yyyy; revised mm dd, yyyy.}}

% note the % following the last \IEEEmembership and also \thanks - 
% these prevent an unwanted space from occurring between the last author name
% and the end of the author line. i.e., if you had this:
% 
% \author{....lastname \thanks{...} \thanks{...} }
%                     ^------------^------------^----Do not want these spaces!
%
% a space would be appended to the last name and could cause every name on that
% line to be shifted left slightly. This is one of those "LaTeX things". For
% instance, "\textbf{A} \textbf{B}" will typeset as "A B" not "AB". To get
% "AB" then you have to do: "\textbf{A}\textbf{B}"
% \thanks is no different in this regard, so shield the last } of each \thanks
% that ends a line with a % and do not let a space in before the next \thanks.
% Spaces after \IEEEmembership other than the last one are OK (and needed) as
% you are supposed to have spaces between the names. For what it is worth,
% this is a minor point as most people would not even notice if the said evil
% space somehow managed to creep in.

% The paper headers
\markboth{IEEE~Transactions~on~Intelligent~Transportation~Systems,~Vol.~vv, No.~nn, mm~yyyy}%
{Ishtiaq~\MakeLowercase{\textit{et al.}}: Interaction-Aware Labeled Multi-Bernoulli Filter}
% The only time the second header will appear is for the odd numbered pages
% after the title page when using the twoside option.
% 
% *** Note that you probably will NOT want to include the author's ***
% *** name in the headers of peer review papers.                   ***
% You can use \ifCLASSOPTIONpeerreview for conditional compilation here if
% you desire.

% If you want to put a publisher's ID mark on the page you can do it like
% this:
%\IEEEpubid{0000--0000/00\$00.00~\copyright~2015 IEEE}
% Remember, if you use this you must call \IEEEpubidadjcol in the second
% column for its text to clear the IEEEpubid mark.

% use for special paper notices
%\IEEEspecialpapernotice{(Invited Paper)}

% make the title area
\maketitle

% As a general rule, do not put math, special symbols or citations
% in the abstract or keywords.
\begin{abstract}
Tracking multiple objects through time is an important part of an intelligent transportation system. Random finite set (RFS)-based filters are one of the emerging techniques for tracking multiple objects. In multi-object tracking (MOT), a common assumption is that each object is moving independent of its surroundings. But in many real-world applications, target objects interact with one another and the environment. Such interactions, when considered for tracking, are usually modelled by an interactive motion model which is application specific. In this paper, we present a novel approach to incorporate target interactions within the prediction step of a RFS-based multi-target filter, i.e. labelled multi-Bernoulli (LMB) filter. The method has been developed for two practical applications of tracking a coordinated swarm and vehicles. The method has been tested for a complex vehicle tracking dataset and compared with the LMB filter through the OSPA and OSPA$^{(2)}$ metrics. The results demonstrate that the proposed interaction-aware method depicts considerable performance enhancement over the LMB filter in terms of the selected metrics.
\end{abstract}

% Note that keywords are not normally used for peerreview papers.
\begin{IEEEkeywords}
multi-object tracking, random finite sets, labelled multi-Bernoulli, interaction-aware tracking
\end{IEEEkeywords}

% For peer review papers, you can put extra information on the cover
% page as needed:
% \ifCLASSOPTIONpeerreview
% \begin{center} \bfseries EDICS Category: 3-BBND \end{center}
% \fi
%
% For peerreview papers, this IEEEtran command inserts a page break and
% creates the second title. It will be ignored for other modes.
\IEEEpeerreviewmaketitle

\section{Introduction}
% The very first letter is a 2 line initial drop letter followed
% by the rest of the first word in caps.
% 
% form to use if the first word consists of a single letter:
% \IEEEPARstart{A}{demo} file is ....
% 
% form to use if you need the single drop letter followed by
% normal text (unknown if ever used by the IEEE):
% \IEEEPARstart{A}{}demo file is ....
% 
% Some journals put the first two words in caps:
% \IEEEPARstart{T}{his demo} file is ....
% 
% Here we have the typical use of a "T" for an initial drop letter
% and "HIS" in caps to complete the first word.
\IEEEPARstart{O}{ne} of the core requirements of an intelligent transportation system (ITS) is its ability to track surrounding objects. This is done by incorporating multi-object tracking (MOT) within ITS. Although an intelligent vehicle should be able to track the surrounding stationary and non-stationary objects, but the tracking of surrounding vehicles is of primary concern. The number of nearby vehicles, their positions and speeds vary as a vehicle moves along a road. Tracking of a randomly varying number of targets (due to target birth and death or entry and exit) is a challenging problem with its difficulty being compounded if there are possible miss-detections and false alarms (clutter) in sensor measurements~\cite{daronkolaei2008joint}. In early 21$^{\mathrm{st}}$ century, a new class of stochastic multi-object filters was invented by Mahler~\cite{mahler_book,mahler_phd2003,MahlerCPHD}. They were called \textit{random finite set}~(RFS) filters and were designed by treating the multi-object entity as a random set of single-object entities (targets) with random cardinality (number of elements in the set). Various approximations, followed by efficient implementations of RFS filters, were then proposed~\cite{Vo_2005,vo_cphd,Vo_FW_Smoothing_TAES_2012,vo_gmphd, Vo_SMC_PHD, MeMBer_Vo2} and applied in different domains~\cite{Reza_2012, Reza_audio_visual,Reza_visual_tracking}. The latest generation of RFS filters, called \textit{labeled RFS filters} append the label of each target into its single-target state and propagate target labels with their states to directly create \textit{target trajectories}~\cite{Vo2013,LMB_Vo2,reuter2014labeled,8531656,papi2015generalized}. The labeled multi-Bernoulli (LMB) filter~\cite{reuter2014labeled} is of particular interest in this paper and will be explored further in Section~\ref{sec:back}.

A major advantage of using stochastic multi-object filters for multi-target tracking, especially with RFS filters, is that they allow us to directly incorporate the environment-related information about targets (regardless of measurements) such as target birth-~and~death-related information, into the prediction step of the filtering process in a mathematically principled manner. This is more evident in the multi-Bernoulli filter and its labeled version, the LMB filter: The birth process is modeled by a set of possibly existing targets, each distributed around one possible area of target entry, and target death is modeled by a state-dependent probability of survival that is small in possible areas of target exit. Incorporation of extra information, if modeled accurately, is expected to result in improvements in the overall cardinality and state estimation and tracking performance of the filter. Nowadays, machine learning-based methods are becoming increasingly common for many applications, including detection-based tracking. However, they have not been deemed very feasible for model-based measurements as these methods require an immense amount of data for training and require sequential detections, which is not the focus of this paper. A recent work \cite{deepLearningModelBased} highlights this issue and develops a deep learning-based method (which is a class of machine learning) and has comparable performance to the state-of-the-art Bayesian methods like $\delta$-generalized labeled multi-Bernoulli ($\delta$-GLMB) filter \cite{papi2015generalized} for simplistic scenarios. Hence for complex model-based tracking, such as the applications discussed in this paper, Bayesian methods are the preferred choice owing to the many drawbacks of deep learning methods highlighted in \cite{deepLearningSurvey2020}.

\textit{Interactions between targets} and how they may affect their movements are something that have been rarely taken into account in formulating stochastic multi-object filters as solutions for multi-target tracking problems. To facilitate the ease of implementation and simplicity of the developed method, most existing target tracking solutions assume that each object moves independently, with no limitations due to the surrounding objects and the environment. This leads to solutions which may perform poorly in a realistic scenario, specifically where the motion often an object is immensely influenced by its surroundings. Specifically in transportation systems, the motion of each object is highly dependent on the surroundings such as motion of other road users and road constraints. In such applications, slight errors in location estimation can lead to highly hazardous consequences\cite{conti2019soft,conti2021location,saucan2020information,win2018theoretical,bartoletti2014mathematical,meyer2018message}. Therefore, there is a need for fundamental tracking solutions which are capable of identifying such interactions and utilizing them for accurate tracking. 

One existing methodology for catering target interactions is forming target groups according to their locations, motion parameters or other application specific parameters\cite{group_tracking1,group_tracking2}. The targets in a group are considered to be behaving in a similar fashion. However, most methods in this category face implementation issues due to splitting and merging of groups. The formation of target groups also introduces direct dependence between targets, often referred as data association. The identification and formulation of target groups is a tedious task, making these methods both mathematically and computationally expensive. In the target tracking literature, particularly in visual tracking and vehicle tracking fields, there have been several attempts to model the interactions between targets. Those models are dominantly \textit{deterministic}. Examples include the interactive motion-based vehicle tracking~\cite{gardner1996interactive,yang2018vehicle}, car-following and lane-changing models~\cite{kesting2007general,rahman2013review,saifuzzaman2014incorporating,song2017multi}, social and group behaviour models for visual tracking~\cite{pellegrini2009you,pellegrini2010improving,yuan2017tracking}. In addition to these methods, as far as we know, except for the generalized labeled multi-Bernoulli~(GLMB) filter~\cite{papi2015generalized}, in other stochastic multi-object filtering solutions, random variations of target states are assumed to be \textit{independent} from each other, i.e. zero interaction is assumed between targets. Even in GLMB filter, target interactions are not directly modeled and incorporated into the filter. Information about such interactions are of the same type of death and birth information, in the sense that they are not measurement-related and could be incorporated into the prediction step of a Bayesian multi-object filter.

Within stochastic filters, there have been a number of works that their design would consider target interactions and their effect on tracking performance. The most prominent examples are the interactive Kalman filter~\cite{khalkhali2019multi} and the unscented Kalman filter~\cite{leven2009unscented,MOT_app_surv1}, and the multiple model filters~\cite{chen2001tracking,Multiple_Model_GLMB}. Kalman filters are naturally designed as single-object filters and when used for multi-target tracking, they are designed as a stack of filters based on availability of prior knowledge about the number of targets. In the multiple model filters, several possible dynamic models are considered for each target movement, and switching from each model to the next is governed by a state machine and constant or state-dependent transition probabilities.

Another approach that could consider interactions between targets, is \textit{direct tracking from image observations} (also called track-before-detect)~\cite{our_icassp_2010,MOT_application_Vision002}. In such works, the multi-target likelihood is a function of image observation (rather than detection extracted from the image observations) and interactions could be indirectly incorporated into the formulation of the likelihood function. Another approach is to use multiple sensors and fuse the information in such a way that despite treating target movement as independent in the prediction step of the stochastic filter, the comprehensive target-related information used in the update step compensate for lack of information on target interactions and still deliver accurate tracking~\cite{Battistelli2013,GCI-MB,Fantacci-BT,Gostar_CSD_Fusion_SP2019,Li_eta_al_TSP_2018_robust_fusion,li2018partial,li2019cardinality,LI2020107246,Wang2018}. Alternatively, one could use controllable sensors that would be actuated/scheduled/selected to obtain information-rich measurements towards compensation for target interaction-related information~\cite{Gostar_CSD_LMB,Gostar2015_TAES,gostar2016multi,gostar2017a,gostar2017b,wang2018a}.

This paper proposes a novel solution to  \textit{directly} incorporate an accurate model of target interactions into the prediction step of an LMB filter, which is our multi-target filter of choice due to its simplicity, intuitiveness and competitive performance in challenging multi-target tracking applications. The core contribution of this work lies in the general applicability of the proposed method. Using no external information, the filter is capable of identifying certain types of interactions among targets. These identified interactions are used to adjust the estimates obtained by the filter, in turn improving the tracking performance. We have devised a novel target prediction methodology where the interactions affect the predicted target states, rather than external interaction-aware motion modelling. If and when a interaction between two targets is identified, the predicted target state is altered by the filter accordingly. We also show how various intuitive deterministic interaction models can be turned into interaction-aware LMB filters for accurate tracking of numerous targets in very challenging applications. Our experiment involves tracking of a large number of vehicles from aerial photos in a complex multi-road junction area. Significant improvements in tracking performance, in terms of optimal sub-pattern assignment (OSPA) metric~\cite{vo_OSPA_Trans_SP}, are demonstrated.

This paper is a significant extension of our recent work~\cite{gostar2019interactive} where the core idea was presented. Here, we present a mathematical proof for LMB prediction. We also present details of implementation, as well as a number of deterministic interaction models that have been already proposed in the literature for practical tracking applications, along with details of how such models can be properly reformulated into stochastic models to be used within our proposed interaction-aware LMB filtering scheme. The experimental results also present a practical challenging vehicle tracking scenario that unlike the original simulations presented in~\cite{gostar2019interactive}, involve tracking of numerous targets with significant ongoing interactions.

The rest of this paper is organised as follows. Section~\ref{sec:back} introduces a background to labeled RFS filters and the notation used in formulating the relevant techniques. Our proposed solution is presented in Section~\ref{sec:our_method}. One possible approach for Sequential Monte Carlo~(SMC) implementation of the proposed interaction-aware LMB filter is presented in Section~\ref{sec:smc_implementation} which also includes examples of how the solution can be implemented in vehicle tracking applications. In Section~\ref{sec:results}, experimental results are provided and discussed. The paper is concluded in Section~\ref{sec:conc}.

\section{Background and Notation}
\label{sec:back}
The notations and definitions used in this paper are summarised in~Table~\ref{tab:notation_and_definitions}. The rest of this section provides a brief review of the general multi-object filter and its prediction and update steps, and the LMB filter.

\begin{table}[t]
    \caption{Notation and definitions}
    \label{tab:notation_and_definitions}
    \centering
    \begin{tabular}{ll}
        \hline
        \textbf{Symbol} & \textbf{Definition} \\ \hline
         lower-case letters (e.g. $x$,$\bm{x}$) \dotfill & single-object state \\
         upper-case letters (e.g. $X$,$\bm{X}$) \dotfill & multi-object state \\
         blackboard letters (e.g. $\mathbb{X,Z,L}$) \dotfill & spaces for variables or labels \\
         $\pi, \bm{\pi}$ \dotfill & a multi-object density \\
         bold-face letters (e.g. $\bm{x,X,\pi}$) \dotfill & labeled entities \\
         $|X|$\dotfill & The cardinality (number \\
         & of elements) of $X$\\
         $\langle f,g \rangle$ \dotfill & $ \int_{\mathbb{X}} f(x) g(x) dx$ \\
         & inner product of two functions \\
         $[p]^X$ \dotfill & $\prod_{x\in X} p(x)$ \\
         $G_X[h]$ \dotfill & $\int\ [h]^X\,\pi(X)\, \delta X$\\
         & Probability Generating Functional \\
         & (PGFl) of RFS variable $X$\\
         $ f_{k|k-1}(\bm{X}_k|\bm{X})$\dotfill & The multi-object state transition \\ 
         & density from time $k-1$ to time $k$\\
         $ f_{k|k-1}(x|x_{k-1},\ell)$\dotfill & The single-object state transition \\
         & density from time $k-1$ to time $k$\\
         $\mathcal{L}(\bm{x})$\dotfill & The label of $\bm{x}$\\
         $\mathcal{L}(\bm{X})$\dotfill & The set of the labels of all\\
         &  members of $\bm{X}$\\
         $1_{\mathbb{L}}(\ell)$\dotfill & equals 1 if $\ell\in\mathbb{L}$ and zero otherwise\\
         $\delta(x)$\dotfill & The Dirac delta function which is 0\\
         & for $x\neq 0$ and satisfies $\int\!\delta(x)dx=1$\\
         \hline
    \end{tabular}
\end{table}

\subsection{The Bayesian multi-object filter}
Let us denote the labeled multi-object state at time $k$ by $\textbf{X}_k\subset\mathbb{X}$ and the multi-object observation by $Z_k\subset\mathbb{Z}$. Both $\textbf{X}_k$ and $Z_k$ are modeled as random finite sets. The multi-object random set distribution is recursively predicted and updated by the filter. We also denote the labeled multi-object prior density (at time $k-1$) by $\bm{\pi}_{k-1}(\cdot|Z_{1:k-1})$, where $Z_{1:k-1}$ is the collection of finite measurements up to time $k-1$. 

The prediction step of the filter is governed by Chapman-Kolmogorov equation,
\begin{equation}
\label{eq:general_prediction}
\bm{\pi}_{k|k-1}(\bm{X}_k|Z_{1:k-1})=\hspace{-1mm}\int\hspace{-2mm}f_{k|k-1}(\bm{X}_k|\bm{X})\bm{\pi}_{k-1}(\bm{X}|Z_{1:k-1})~\delta\hspace{-0.3mm}\bm{X},
\end{equation}
where the integral is the labeled set integral defined in~\cite{Vo2013}. 

In the update step, Bayes' rule returns the following multi-object posterior:
\begin{equation}
\bm{\pi}_{k}(\bm{X}_k|Z_{1:k})=\frac{g_k(Z_k|\bm{X}_k)\bm{\pi}_{k|k-1}(\bm{X}_k|Z_{1:k-1})}
{\int g_k(Z_k|\bm{X})\bm{\pi}_{k|k-1}(\bm{X}|Z_{1:k-1})\delta \bm{X}},
\end{equation}
where $Z_k$ is generally comprised of some measurements each associated with an object (with some objects possibly missed), and some false alarms or clutter. Both the number of object-related measurements and the number of false alarms randomly vary with time. Hence, $Z_k$ is an RFS with its stochastic variations characterized by a multi-object likelihood function $g_k(Z_k|\bm{X}_k)$.

\subsection{The labeled multi-Bernoulli RFS}
An LMB~RFS is the union of a number of \textit{possibly existing} single-object sets that are assumed \textit{statistically independent}, and is denoted by:
\begin{equation}
    \bm{X} = \bigcup_{\ell\in\mathbb{L}} \bm{X}^{(\ell)}
    \label{eq:LMBRFS}
\end{equation}
where each $\bm{X}^{(\ell)}$ is a labeled RFS representing one possibly existing target. The statistical variations of each $\bm{X}^{(\ell)}$ is characterised by (1) its probability of existence denoted by $r^{(\ell)}$, and (2) its single-object density $p^{(\ell)}(x)$ conditioned on its existence. 

As Reuter et. al~\cite{reuter2014labeled} have shown, the LMB distribution is completely described by its component parameters, i.e. $\bm{\pi} = \{(r^{(\ell)},p^{(\ell)}(\cdot))\}_{\ell\in\mathbb{L}}$. Indeed, given the component parameters, the LMB RFS density is given by
\begin{equation}
\label{eq:LMB_Distribution}
\bm{\pi}(\mathbf{X})=\Delta (\mathbf{X})w (\mathcal{L}(\mathbf{X}))\left[ p\right] ^{\mathbf{X}},  
\end{equation}
where $\Delta(\mathbf{X})$ is included to ensure uniqueness of object labels, and is defined to be one if $|\mathbf{X}| = |\mathcal{L}(\mathbf{X})|$ and zero otherwise. Also we have: 
\begin{eqnarray}
    [p]^{\bm{X}} &=& \prod_{(x,\ell)\in\bm{X}} p^{(\ell)}(x) \\
    w(L) &=& \left[\prod\limits_{i \in (\mathbb{L}-L)}\left( 1-r^{(i)}\right)\right] \left[\prod\limits_{\ell \in L} {1_{\mathbb{L}}(\ell)\ r^{(\ell)}}\right]
\end{eqnarray}
where $w(L)$ is the probability of joint existence of all objects with labels $\ell \in L$ and non-existence of all other labels~\cite{reuter2014labeled}.

An important characteristic of the LMB RFS is its PGFl. Considering that PGFl of the union of independent RFSs equals the product of their individual PGFls~\cite{mahler_book}, for the LMB RFS given in~\eqref{eq:LMBRFS} we have:
\begin{equation}
    G_X[h] = \prod_{\ell\in\mathbb{L}} G_{X^{(\ell)}}[h]
\end{equation}
where the PGFl of each single-Bernoulli component is given by:
\begin{equation}
    \begin{array}{rcl}
         G_{X^{(\ell)}}[h] & = &  \int\  [h]^{\bm{X}}\,\bm{\pi}^{(\ell)}(X)\,\delta\bm{X} \\
         & = & [h]^{\emptyset}\,\bm{\pi}^{(\ell)}(\emptyset) + \\
         && \int\ [h]^{\{x\}}\,\bm{\pi}^{(\ell)}(\{x\})\,d{x}.
    \end{array}
\end{equation}
Noting that 
$$
[h]^{\emptyset}\triangleq 1,\ \bm{\pi}^{(\ell)}(\emptyset) = 1 - r^{(\ell)},\ \bm{\pi}^{(\ell)}(\{x\}) = r^{(\ell)}\, p^{(\ell)}(x),
$$
 we have:
\begin{equation}
    \begin{array}{rcl}
         G_{X^{(\ell)}}[h] & = &  \left(1-r^{(\ell)}\right)+\langle h(\cdot)\, ,\, r^{(\ell)} p^{(\ell)}(\cdot)\rangle.
    \end{array}
\end{equation}
Therefore, the PGFl of an LMB RFS density $\{(r^{(\ell)},p^{(\ell)}(\cdot))\}_{\ell\in\mathbb{L}}$ is given as follows:
\begin{equation}
    G_X[h] = \prod_{\ell\in\mathbb{L}} \left[\left(1-r^{(\ell)}\right)+\langle h(\cdot)\, ,\, r^{(\ell)} p^{(\ell)}(\cdot)\rangle\right].
    \label{eq:LMB-PGFl}
\end{equation}

\subsection{The multi-object system model}
The target birth and death, the application constraints and the measurement information are all encapsulated in the multi-object filter through a \textit{multi-object system model}. This model is comprised of two parts: the \textit{multi-object dynamic} model and \textit{the multi-object measurement} model. The former is employed within the prediction step of the filter, and the latter in the update step. 

The multi-object dynamic model incorporates all the non-measurement type information that exist about the targets, including their possible state transitions (e.g. movements) from each time step to the next, possible regions of their entry to the scene (the birth process) and possible regions of their exit (target death). A model for interactions between targets is best to be incorporated into this part of the multi-object system model. Hence, we provide a brief overview of the model here (which will be expanded further as the proposed solution is presented in the paper).

Target death is usually modeled by a state-dependent (normally a location-dependent) \textit{probability of survival} denoted by ${p_S(\bm{x})}$. The target dynamics itself is usually modeled by a state transition density. Given a target state $\bm{x}_{k-1}$ at time $k-1$, the probability density of the next state of the target (at time $k$) is denoted by $f_{k|k-1}(\bm{x_k}|\bm{x_{k-1}})$. For a survey of most common target dynamic models, readers are referred to~\cite{li2003survey}. 

Given a target state $\bm{x}_{k-1} \in \bm{X_{k-1}}$ at time $k-1$, its behaviour at time $k$ is modelled via a Markov shift, resulting in a Bernoulli RFS $\bm{S}_{k|k-1}(\bm{x}_{k-1})$ with probability of existence of $r = p_S({\bm{x}_{k-1}})$ and density of $p(\cdot) = f_{k|k-1}(\bm{\cdot}|{x}_{k-1},\ell)$. Hence, given a labeled set of targets $\bm{X}_{k-1}$ at time $k-1$, its transition to time $k$ is modeled by the LMB:
\begin{equation}
    \bm{T}_{k|k-1}(\bm{X}_{k-1}) = \bigcup_{\bm{x}\in\bm{X}_{k-1}} \bm{S}_{k|k-1}(\bm{x}).
\end{equation}

The appearance of new objects (the birth process) at time $k$ is modelled by an LMB RFS of spontaneous births, denoted by 
$$\bm{\Gamma}_k = \left\{\left(r^{(\ell_B)},p^{(\ell_B)}(\cdot)\right)\right\}_{\ell_B\in\mathbb{B}_k}$$
where $\mathbb{B}_k$ denotes the space of labels of targets that may be born at time $k$. Consequently, the labeled RFS of multi-object state $\bm{x}_k$ at time $k$ is itself an LMB given by the union
\begin{equation}
    \bm{X}_k = \bm{T}_{k|k-1} \cup \bm{\Gamma}_k.
\end{equation}

The multi-object state transition density $f_{k|k-1}(\bm{X}_{k}|\bm{X}_{k-1})$ is the density of an LMB with the parameters:
\begin{equation}
    \label{eq:general_f_k|k-1}
    \begin{array}{c}
    \left\{
        \left(p_S(\bm{x}),f_{k|k-1}(\cdot|\bm{x})
        \right)
        \right\}_{\bm{x}\in\bm{X}_{k-1}}
        \bigcup \ \left\{\left(r^{(\ell_B)},p^{(\ell_B)}(\cdot)\right)\right\}_{\ell_B\in\mathbb{B}_k}.
    \end{array}
\end{equation}

In an LMB filter, assume that the multi-object prior is an LMB density with parameters 
\begin{equation}
    \label{eq:prior_LMB_parameters}
    \bm{\pi}_{k-1} = \left\{\left(r_{k-1}^{(\ell)}, p_{k-1}^{(\ell)}(\cdot)\right)\right\}_{\ell\in\mathbb{L}_{k-1}}
\end{equation}
where $\mathbb{L}_{k-1}$ denotes the space of target labels at time $k-1$. Note that this space is gradually extended with time as new targets are born, according to:
\begin{equation}
    \mathbb{L}_{k} = \mathbb{L}_{k-1} \cup \mathbb{B}_{k}.
\end{equation}
Substituting from equations~\eqref{eq:general_f_k|k-1} and~\eqref{eq:prior_LMB_parameters} in the Chapman-Kolmogorov equation~\eqref{eq:general_prediction}, leads to the following parametrisation of the predicted multi-object density~\cite{reuter2014labeled}:
\begin{equation}
    \bm{\pi}_{k|k-1} = \left\{\left(r_{k|k-1}^{(\ell)}, p_{k|k-1}^{(\ell)}(\cdot)\right)\right\}_{\ell\in\mathbb{L}_{k-1}}\hspace{-3mm} \bigcup \ \left\{\left(r^{(\ell_B)},p^{(\ell_B)}(\cdot)\right)\right\}_{\ell_B\in\mathbb{B}_k} 
\end{equation}
where 
\begin{eqnarray}
    r_{k|k-1}^{(\ell)} & = & \left\langle p_S^{(\ell)}(\cdot),p^{(\ell)}_{k-1}(\cdot) \right\rangle \ r^{(\ell)}_{k-1} \label{eq:predicted_r}\\
    p_{k|k-1}^{(\ell)}(x) & = & \dfrac{\left\langle p_S^{(\ell)}(\cdot) f_{k|k-1}(x|\cdot,\ell)\, ,\, p^{(\ell)}_{k-1}(\cdot) \right\rangle}{\left\langle  p_S^{(\ell)}(\cdot),p^{(\ell)}_{k-1}(\cdot) \right\rangle}. \label{eq:predicted_p}
\end{eqnarray}
\section{The Interaction-Aware LMB Filter}
\label{sec:our_method}
A major observation from the equations reviewed in Section~\ref{sec:back} is that at any time $k$, target birth is modeled by the birth LMB $\bm{\Gamma}_k$, target death is modeled by the state-dependent probability of survival, $p_S(\bm{x})$, and transition of each target state is modeled by the conditional density $f_{k|k-1}(x|x_{k-1},\ell)$. The core idea presented in this paper is that we can model any prior information into the state transition density, with including the \textit{parameters} of the overall multi-object LMB prior as \textit{additional parameters}. Implementing such a model with LMB densities is straightforward because at time $k$, each possibly existing object with label $\ell$ is separately associated with parameters
$r_{k-1}^{(\ell)}$ and $p_{k-1}^{(\ell)}(\cdot)$. Note that the density parameter is usually replaced with weights, means and covariance matrices (if a Gaussian-mixture implementation is applied) or with particles and their weights (if a sequential Monte-Carlo implementation is applied).

We envisage that at any time $k$, interactions between each target $\ell$ and other targets can be formulated by making its transition density dependent on distribution parameters of all the \textit{other} targets at time $k-1$. 

Let us denote the Bernoulli parameters of each target label $\ell$ at time $k$ by
$$
\pi_k^{(\ell)} \triangleq 
\left(
r_k^{(\ell)},p_k^{(\ell)}(\cdot)
\right)
$$
and denote the set of all such parameters except for label $\ell$ by 
$$
\psi_{k}^{(\ell)} \triangleq \bigcup_{\ell'\in\,\mathbb{L}_{k-1}\backslash \{\ell\}} \pi_{k-1}^{(\ell')}.
$$
An \textit{interaction-aware} single-object transition density is denoted by $f_{k|k-1}(x|x_{k-1},\ell;\psi_{k}^{(\ell)})$. The most straightforward example of how the parameters $\psi_{k}^{(\ell)}$ can be incorporated into the state transition density is by inferring a multi-object estimate at time $k-1$ and use it as the parameters in the transition density. 
There are two common approaches for this inference. The first is to extract only those Bernoulli components whose probabilities of existence are greater than a given threshold (e.g. 0.50) and the second is to calculate the cardinality estimate, $|\hat{\bm{X}}_{k-1}|$, as the sum of all probabilities of existence, then consider only those Bernoulli components whose probabilities of existence are among the $|\hat{\bm{X}}_{k-1}|$-th largest. In both cases, if a Bernoulli component with label $\ell$ is chosen, its state estimate is given by 
$$
\hat{\bm{x}}_{k-1}^{(\ell)} = \int\!\! x\,p^{(\ell)}_{k-1}(x)\,dx
$$
and the interaction-aware single-object state transition density is denoted by $f_{k|k-1}(x|x_{k-1},\ell;\hat{\bm{X}}_{k-1}\backslash\,\hat{\bm{x}}_{k-1}^{(\ell)})$.
\begin{rem}
Note that if $\ell\notin \mathcal{L}(\hat{\bm{X}}_{k-1})$, then we simply have $\hat{\bm{X}}_{k-1}\backslash\,\hat{\bm{x}}_{k-1}^{(\ell)} = \hat{\bm{X}}_{k-1}$.
\end{rem}
In Section~\ref{sec:smc_implementation}, we will present a number of examples to demonstrate how common and intuitive interactions between vehicle targets can be modelled into a previous estimate-dependent transition density $f_{k|k-1}(x|x_{k-1},\ell;\hat{\bm{X}}_{k-1}\backslash\,\hat{\bm{x}}_{k-1}^{(\ell)})$.
\begin{figure}
    \centering
    \includegraphics[width=0.9\columnwidth]{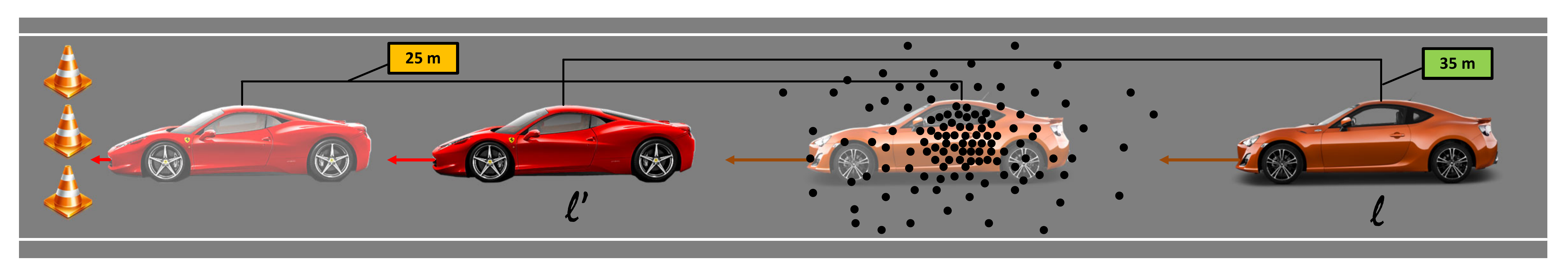}
    \\
    \small (a) \vspace{3mm}
    \\
    \includegraphics[width=0.9\columnwidth]{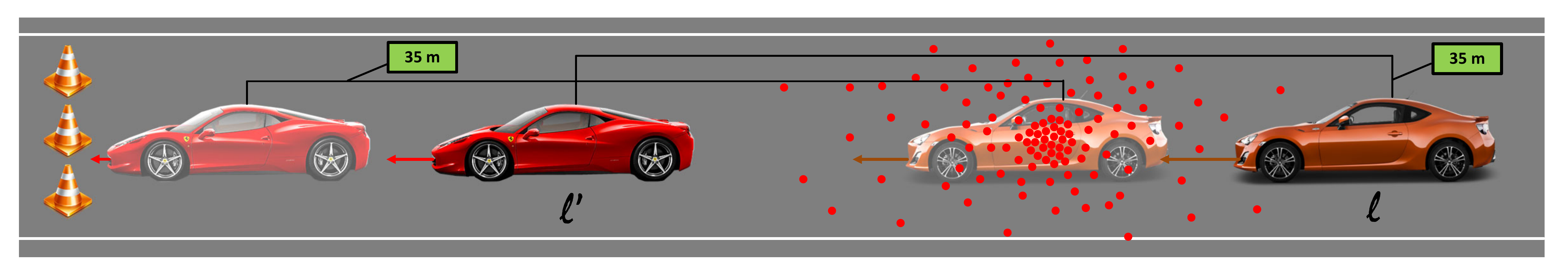}
    \\
    \small (b)
    \caption{An example of predicted particle distributions and expected target locations at next time (faded vehicles) for a vehicle with label $(\ell)$ and its leading vehicle $(\ell')$ when the leading vehicle suddenly reduces speed: (a) without any interactions incorporated, (b) with interactions incorporated.}
    \label{fig:independence_example}
\end{figure}

\begin{rem}
\label{rem:independence}
With incorporating the interaction-aware single-object density, whether in the general form of $f_{k|k-1}(x|x_{k-1},\ell;\psi_{k}^{(\ell)})$ or the more specific form of $f_{k|k-1}(x|x_{k-1},\ell;\hat{\bm{X}}_{k-1}\backslash\,\hat{\bm{x}}_{k-1}^{(\ell)})$, the Bernoulli components of the multi-object LMB at time $k$ can still be assumed \textit{conditionally independent}, given the prior. This is similar to the conditional independence that is commonly assumed for measurements, given the object state.
\end{rem}

The conditional independence of target states discussed in Remark \ref{rem:independence} can be elaborated with the help of a practical example. In Fig.~\ref{fig:independence_example}, two vehicles $(\ell)$(orange car) and ${(\ell)}'$(red car) can be seen moving along a road, maintaining a safe distance between them, which is shown as $35 m$. The arrows ahead of each vehicle represent their velocities, the larger the arrow the greater the velocity. The faded out vehicles show possible next locations of the vehicles. Due to a possible hazard, the leading vehicle ${(\ell)}'$ may need to abruptly stop or slow down significantly by hard braking. This has been depicted by a road closure in Fig.~\ref{fig:independence_example}. It can be seen that the next location of red car has a very small velocity, which should force the vehicle $(\ell)$ to reduce speed as well. In traditional MOT based filtering, the trajectory of the following vehicle $(\ell)$ will not be impacted due to this sudden change in velocity of ${(\ell)}'$, since all objects are assumed to be moving independently, as depicted in Fig.~\ref{fig:independence_example}(a). The predicted particles for the next state of vehicle $(\ell)$ in traditional MOT have been depicted by black circles. It can be seen that they are spread mostly ahead of the vehicle $(\ell)$ according to a motion model, with no affect on the predicted next state due to the sudden change in motion of vehicle ${(\ell)}'$. This would cause the distance between the two vehicles to be significantly reduced, shown as $25 m$ in the figure. In interaction-aware tracking, however, the system makes use of the available information from other possibly interacting targets, which is usually in the form of estimates from the previous time. The ground truth values of target states cannot be directly used by the filter, therefore the target estimates can be used as an approximation of the actual target states (ground truths). By making use of this information, an interaction-aware filter is therefore capable of possibly changing the predicted next state motion information of the following vehicle $(\ell)$ according to any abrupt changes in the motion of ${(\ell)}'$. The possible predicted particles for the next state of vehicle $(\ell)$ have been depicted in Fig.~\ref{fig:independence_example}(b) by red circles. It can be seen that the particles are now more concentrated near a smaller possible velocity for vehicle $(\ell)$, which is the expected actual behaviour of the vehicle. Since the interaction-aware filter uses estimates from the previous time-step to implement such interactive behaviour rather than using the other targets' states from the current time, therefore, the states of targets can still be assumed to be \textit{conditionally independent}. The example shown in Fig.~\ref{fig:independence_example} also depicts the importance of correct estimation of target states. If the position and/or velocity of a vehicle is incorrectly estimated, especially in automated driving systems, it can lead to hazardous situations. Therefore interaction-aware tracking for ITS is of vital importance.

\begin{prop}
\label{prop:int_aware_LMB_prediction}
In an LMB filter, assume that the multi-object prior is an LMB density with parameters given in~\eqref{eq:prior_LMB_parameters} and the single-object state transition density is interaction-aware and parametrised as $f_{k|k-1}(x|x_{k-1},\ell;\psi_{k}^{(\ell)})$. Then, applying the Chapman-Kolmogorov equation~\eqref{eq:general_prediction}, leads to an interaction-aware predicted LMB in which the predicted probabilities of existence are given by~\eqref{eq:predicted_r} and the predicted single-object densities are given by:
\begin{equation}
    p_{k|k-1}^{(\ell)}(x) = \dfrac{\left\langle p_S(\cdot) f_{k|k-1}(x|\cdot,\ell;\psi_{k}^{(\ell)})\, ,\, p^{(\ell)}_{k-1}(\cdot) \right\rangle}{\left\langle  p_S(\cdot),p^{(\ell)}_{k-1}(\cdot) \right\rangle}. \label{eq:interaction_aware_predicted_p}
\end{equation}
\end{prop}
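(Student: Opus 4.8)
The plan is to show that introducing the interaction parameter $\psi_k^{(\ell)}$ into the single-object transition density does not disturb any of the structural assumptions behind the standard LMB prediction of Reuter \emph{et al.}, so that the derivation producing~\eqref{eq:predicted_r} and~\eqref{eq:predicted_p} carries through verbatim with $f_{k|k-1}(\cdot|\bm{x})$ replaced by $f_{k|k-1}(\cdot|\bm{x},\ell;\psi_k^{(\ell)})$. The decisive observation is that $\psi_k^{(\ell)}$ is assembled entirely from the \emph{prior} parameters $\{(r_{k-1}^{(\ell')},p_{k-1}^{(\ell')}(\cdot))\}_{\ell'\neq\ell}$ given in~\eqref{eq:prior_LMB_parameters}; it is therefore a fixed, deterministic quantity once we condition on $\bm{\pi}_{k-1}$, and in particular it does \emph{not} depend on the integration variable $\bm{X}$ in the Chapman--Kolmogorov equation~\eqref{eq:general_prediction}. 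Consequently $\psi_k^{(\ell)}$ acts merely as an extra constant label on the transition density rather than as a new random argument.

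First I would substitute the interaction-aware transition density into the multi-object dynamic model~\eqref{eq:general_f_k|k-1}, so that each surviving target $(\bm{x}_{k-1},\ell)$ gives rise to a Bernoulli RFS $\bm{S}_{k|k-1}(\bm{x}_{k-1})$ with existence probability $p_S(\bm{x}_{k-1})$ and spatial density $f_{k|k-1}(\cdot|\bm{x}_{k-1},\ell;\psi_k^{(\ell)})$. Because, by Remark~\ref{rem:independence}, the components remain conditionally independent given the prior, the multi-object transition kernel is still a genuine LMB and its PGFl factorises across labels. I would then compute the predicted PGFl by averaging the conditional transition PGFl over the prior and comparing the result with the canonical LMB form~\eqref{eq:LMB-PGFl}, reading off the predicted Bernoulli parameters label by label.

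The two parameters then separate cleanly. For the probability of existence, the only feature of $f_{k|k-1}$ that enters is the normalisation $\int f_{k|k-1}(x|\bm{x}_{k-1},\ell;\psi_k^{(\ell)})\,dx = 1$, which holds for every fixed $\psi_k^{(\ell)}$; hence the $\psi$-dependence drops out and~\eqref{eq:predicted_r} is recovered unchanged. For the spatial density, the survival-weighted average of the transition density over the prior, normalised by $\langle p_S(\cdot),p_{k-1}^{(\ell)}(\cdot)\rangle$, now carries the fixed parameter through the inner product, yielding exactly~\eqref{eq:interaction_aware_predicted_p}.

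The step I expect to require the most care is justifying the factorisation across labels, i.e.\ the conditional-independence claim of Remark~\ref{rem:independence}. One must verify that conditioning the motion of target $\ell$ on $\psi_k^{(\ell)}$ --- a function of the \emph{other} targets' prior parameters, not of their current random states --- introduces no statistical coupling between the survival sets $\bm{S}_{k|k-1}(\bm{x})$ of distinct labels, so that the product form of the transition PGFl, and hence the LMB structure of the predicted density, is preserved. Once this is granted, the remainder is a direct re-run of the independent-case algebra.
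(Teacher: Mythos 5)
Your proposal is correct and follows essentially the same route as the paper's own proof: both rely on the fact that $\psi_{k}^{(\ell)}$ is deterministic given the prior (so conditional independence in Remark~\ref{rem:independence} holds), form the factorised conditional transition PGFl, average it over the prior LMB, and match the result against the canonical LMB PGFl form~\eqref{eq:LMB-PGFl} to read off~\eqref{eq:predicted_r} and~\eqref{eq:interaction_aware_predicted_p} label by label. The only cosmetic difference is that you spell out why the predicted existence probability is $\psi$-free (the transition density integrates to one for every fixed $\psi_{k}^{(\ell)}$), a detail the paper buries in its omitted ``algebraic manipulation.''
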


\begin{proof}
Based on the observation made in Remark~\ref{rem:independence} and the PGFl for LMB RFS given in~\eqref{eq:LMB-PGFl}, the PGFl of the possibly surviving targets, conditional on prior multi-object state $\bm{X}_{k-1}$, is given by
\begin{equation}
    \begin{array}{rcl}
        G_{\bm{T}_{k|k-1}}[h|\bm{X}_{k-1}] &=& \prod_{\bm{x}\in \bm{X}_{k-1}} \Big[\big(1-p_S(\bm{x})\big)+\big\langle h(\cdot)\, ,\\
        && \hspace{1.7cm}p_S(\bm{x}) f_{k|k-1}(\cdot| \bm{x};\psi_{k}^{(\ell)})\big\rangle\Big] \\
        & = & \prod_{\bm{x}\in \bm{X}_{k-1}} \Big[\big(1-p_S(\bm{x})\big)+ p_S(\bm{x})\,\big\langle h(\cdot)\, ,\\
        && \hspace{2.6cm} f_{k|k-1}(\cdot| \bm{x};\psi_{k}^{(\ell)})\big\rangle\Big].
    \end{array}
    \label{eq:survivng-PGFl}
\end{equation}
The PGFl of the predicted multi-object RFS is then given by
\begin{equation}
    \begin{array}{rcl}
        G_{\bm{X}_{k|k-1}}[h] &=& \int\ [h]^{\bm{X}} \bm{\pi}_{k|k-1}(\bm{X})\, \delta\bm{X} \\
        &=& \int\ [h]^{\bm{X}} 
        \Big[\int f_{k|k-1}(\bm{X}|\bm{Y};{\Psi}_{k}) \\
        && \hspace{3.4cm}\bm{\pi}_{k-1}(\bm{Y})~\delta\bm{Y}\Big]\,\delta\bm{X} \\
        & = & \int\int\ [h]^{\bm{X}} f_{k|k-1}(\bm{X}|\bm{Y};{\Psi}_{k}) \\
        && \hspace{3.5cm}\bm{\pi}_{k-1}(\bm{Y})~\delta\bm{X}\,\delta\bm{Y} \\
        & = & \int\Big[\int [h]^{\bm{X}} f_{k|k-1}(\bm{X}|\bm{Y};{\Psi}_{k})\,\delta\bm{X} \Big] \\
        &&\hspace{4.1cm}\bm{\pi}_{k-1}(\bm{Y})~\delta\bm{Y}.
    \end{array}
    \label{eq:predicted-PGFl}
\end{equation}
Let us consider to add the birth LMB process at the end of the prediction step. Thus, discarding the birth terms, the term inside the bracket in equation~\eqref{eq:predicted-PGFl} equals $G_{\bm{T}_{k|k-1}}[h|\bm{Y}]$. Thus we have:
\begin{equation}
    \begin{array}{rcl}
        G_{\bm{X}_{k|k-1}}[h] & = & \int G_{\bm{T}_{k|k-1}}[h|\bm{X}] \bm{\pi}_{k-1}(\bm{X})~\delta\bm{X} \\
        & = & \int \prod_{\bm{x}\in \bm{X}} \Big[\big(1-p_S(\bm{x})\big)+ p_S(\bm{x})\,\big\langle h(\cdot)\, ,\\
        && \hspace{1.5cm} f_{k|k-1}(\cdot| \bm{x};\psi_{k}^{(\ell)})\big\rangle\Big]
        \bm{\pi}_{k-1}(\bm{X})~\delta\bm{X}\\
        & = & \int \prod_{\bm{x}\in \bm{X}} h'(\bm{x}) \bm{\pi}_{k-1}(\bm{X})~\delta\bm{X}
    \end{array}
    \label{eq:predicted-PGFl-2}
\end{equation}
where $$
h'(\bm{x}) = \big(1-p_S(\bm{x})\big)+ p_S(\bm{x})\,\big\langle h(\cdot), f_{k|k-1}(\cdot| \bm{x};\psi_{k}^{(\ell)})\big\rangle.
$$
Thus, $G_{\bm{X}_{k|k-1}}[h] = G_{\bm{X}_{k-1}}[h'].$ Noting that $\bm{X}_{k-1}$ is an LMB with parameters $\left\{\left(r_{k-1}^{(\ell)}, p_{k-1}^{(\ell)}(\cdot)\right)\right\}_{\ell\in\mathbb{L}_{k-1}}$, substituting these parameters in~equation~\eqref{eq:LMB-PGFl} results in
\begin{equation}
    \begin{array}{c}
        G_{\bm{X}_{k|k-1}}[h] = \prod_{\ell\in\mathbb{L}_{k-1}} \left[(1-r_{k-1}^{(\ell)})+\big\langle h'(\cdot)\, ,\, r_{k-1}^{(\ell)} p_{k-1}^{(\ell)}(\cdot)\big\rangle\right]
    \end{array}
    \label{eq:predicted-PGFl-3}
\end{equation}
in which the inner product can be further expanded by substituting $h'(\cdot)$ as follows:
\begin{equation}
    \begin{array}{l}
         \big\langle h'(\cdot)\, ,\, r_{k-1}^{(\ell)} p_{k-1}^{(\ell)}(\cdot)\big\rangle = \int \big[\big(1-p_S(\bm{x})\big)+ p_S(\bm{x})\,\big\langle h(\cdot), \\
          \hspace{3.5cm} f_{k|k-1}(\cdot| x,\ell;\psi_{k}^{(\ell)})\rangle\big] r_{k-1}^{(\ell)} p_{k-1}^{(\ell)}(x)\,dx.
    \end{array}
\end{equation}
Substituting the inner product term in~\eqref{eq:predicted-PGFl-3}, followed by some algebraic manipulation (that we omit for the sake of brevity) will lead to
\begin{equation}
    \begin{array}{c}
        G_{\bm{X}_{k|k-1}}[h] = \prod_{\ell\in\mathbb{L}_{k-1}} \left[(1-\rho^{(\ell)}+\big\langle h(\cdot)\, ,\, \rho^{(\ell)} \ q^{(\ell)}(\cdot)\big\rangle\right]
    \end{array}
    \label{eq:predicted-PGFl-4}
\end{equation}
where 
$$\rho^{(\ell)} = \left\langle p_S^{(\ell)}(\cdot),p^{(\ell)}_{k-1}(\cdot) \right\rangle \ r^{(\ell)}_{k-1}$$
and
$$
q^{(\ell)}(x) = \dfrac{\left\langle p_S(\cdot) f_{k|k-1}(x|\cdot,\ell;\psi_{k}^{(\ell)})\, ,\, p^{(\ell)}_{k-1}(\cdot) \right\rangle}{\left\langle  p_S(\cdot),p^{(\ell)}_{k-1}(\cdot) \right\rangle}.
$$
Equation~\eqref{eq:predicted-PGFl-4} matches the mathematical form of the PGFl of an LMB density given by~\eqref{eq:LMB-PGFl}. Therefore, the predicted multi-object RFS, $\bm{X}_{k|k-1}$ turns out to be LMB with parameters 
\begin{eqnarray}
    r_{k|k-1}^{(\ell)} & = & \rho^{(\ell)} = \left\langle p_S^{(\ell)}(\cdot),p^{(\ell)}_{k-1}(\cdot) \right\rangle \ r^{(\ell)}_{k-1} \\
    \nonumber p_{k|k-1}^{(\ell)}(x) & = & q^{(\ell)}(x) 
    = \frac{\left\langle p_S(\cdot) f_{k|k-1}(x|\cdot,\ell;\psi_{k}^{(\ell)})\, ,\, p^{(\ell)}_{k-1}(\cdot) \right\rangle}{\left\langle  p_S(\cdot),p^{(\ell)}_{k-1}(\cdot) \right\rangle}. \\
    && \label{eq:int_aware_LMP_p}
\end{eqnarray}
\end{proof}
\section{Sequential Monte Carlo Implementation}
\label{sec:smc_implementation}
In an SMC implementation (also called particle implementation) of the LMB filter, the density of each Bernoulli component is approximated by particles and their weights as follows~\cite{reuter2014labeled},
\begin{equation}
    p^{(\ell)}_{k-1}(x) = \sum_{j=1}^{J^{(\ell)}_{k-1}} \omega_{k-1,j}^{(\ell)} \ \delta(x-x_{k-1,j}^{(\ell)})
\end{equation}
where 
$
\sum_{j=1}^{J^{(\ell)}_{k-1}} \omega_{k-1,j}^{(\ell)} = 1.
$

To prevent particle death, the particles are commonly \textit{resampled} at the end of previous iteration which leads to a new set of particles with equal weights. In this case we have:
$$
\forall j\in [1:J^{(\ell)}_{k-1}],\ \ \ \omega_{k-1,j}^{(\ell)} = 1\big/{J^{(\ell)}_{k-1}}.
$$
From equation~\eqref{eq:predicted_r} the predicted probability of existence is simply given by:
\begin{equation}
    r_{k|k-1}^{(\ell)} = \eta_S^{(\ell)} r^{(\ell)}_{k-1}
\end{equation}
where
\begin{equation}
    \eta_S^{(\ell)} = \sum_{j=1}^{J^{(\ell)}_{k-1}} \omega_{k-1,j}^{(\ell)} \, p_S^{(\ell)}(x_{k-1,j}^{(\ell)})
\end{equation}
and with equal weights, we have
\begin{equation}
    r_{k|k-1}^{(\ell)} = \left(\frac{\sum_{j=1}^{J^{(\ell)}_{k-1}} p_S^{(\ell)}(x_{k-1,j}^{(\ell)})}{J^{(\ell)}_{k-1}}\right)\,  r^{(\ell)}_{k-1}.
\end{equation}
With the interaction-aware LMB filter proposed in section~\ref{sec:our_method}, the predicted probability of existence is still given by \eqref{eq:predicted_r} -- see Proposition~\ref{prop:int_aware_LMB_prediction}.

To implement equation~\eqref{eq:predicted_p}, the particles are resampled according to an importance density function $q_k^{(\ell)}(\cdot|x_{k-1})$,
$$
\forall j\in [1:J^{(\ell)}_{k-1}],\ \ \ x_{k,j}^{(\ell)} \sim q_k^{(\ell)}\left(\cdot|x_{k-1,j}^{(\ell)}\right)
$$
and then, from~\eqref{eq:predicted_p}, the predicted particle weights are given by:
\begin{equation}
    \label{eq:SMC_weights_original}
    \omega_{k|k-1,j}^{(\ell)} \propto \omega_{k-1,j}^{(\ell)} \ p_S(x_{k-1,j}^{(\ell)})
    \ \frac
    {
    f_{k|k-1}(x_{k,j}^{(\ell)}|x_{k-1,j}^{(\ell)})
    }
    {
    q_k^{(\ell)}(x_{k,j}^{(\ell)}|x_{k-1,j}^{(\ell)})
    }.
\end{equation}
The most common choice for the importance density is the single-object state density itself, in which case we have:
\begin{equation}
    \label{eq:SMC_weights_simpler}
    \omega_{k|k-1,j}^{(\ell)} \propto \omega_{k-1,j}^{(\ell)} \ p_S(x_{k,j}^{(\ell)}).
\end{equation}
In many multi-target tracking applications (e.g. in radar tracking), target death can happen everywhere indiscriminately. In such applications, the probability of survival $p_S$ is constant, and from equation~\eqref{eq:SMC_weights_simpler} the weights of the newly sampled particles remain unchanged.

To implement equation~\eqref{eq:int_aware_LMP_p}, we consider two cases: (i) the interaction-aware single-object density $f_{k|k-1}(x_k|x_{k-1},\ell,\psi_k^{(\ell)})$ can be directly sampled, (ii) the interaction-aware single-object density is modelled as the product of two components, one that is based on a motion model with no interaction included, and one that is focused on modelling target interaction, i.e. 
$$ 
\begin{array}{l}
f_{k|k-1}(x_k|x_{k-1},\ell,\psi_k^{(\ell)}) \propto \bar{f}_{k|k-1}(x_k|x_{k-1},\ell) \times  \\
 \hspace{5.5cm}g_{k|k-1}(x_k|x_{k-1},\psi_k^{(\ell)}).
\end{array}
$$ 
In the former case, similar to the original LMB filter, each particle is propagated to a new one according to 
$$
\forall j\in [1:J^{(\ell)}_{k-1}],\ \ \ x_{k,j}^{(\ell)} \sim f_{k|k-1}\left(\cdot|x_{k-1,j}^{(\ell)},\ell,\psi_k^{(\ell)}\right)
$$
and their weights remain unchanged for a constant $p_S$, otherwise simply scaled by $p_S(\cdot)$ values at each particle -- see~\eqref{eq:SMC_weights_simpler}. In the latter case, not only the particles are propagated according to 
$$
\forall j\in [1:J^{(\ell)}_{k-1}],\ \ \ x_{k,j}^{(\ell)} \sim \bar{f}_{k|k-1}\left(\cdot|x_{k-1,j}^{(\ell)},\ell\right)
$$
but also their weights are scaled (even with constant $p_S$) according to:
\begin{equation}
    \label{eq:SMC_weights_interactive}
    \omega_{k|k-1,j}^{(\ell)} \propto \omega_{k-1,j}^{(\ell)} \ p_S(x_{k-1,j}^{(\ell)})\ g_{k|k-1}(x_{k,j}^{(\ell)}|x_{k-1,j}^{(\ell)},\psi_k^{(\ell)}).
\end{equation}

\subsection{Tracking of coordinated swarm targets}
Consider an application where targets are known to be likely to move in a \textit{coordinated swarm}, i.e. while each target maneuvers, it is likely to maintain its distance from the closest target. Also, targets may die anywhere (hence, the survival probability $p_S$ is constant). Each target state transitions from time $k-1$ to time $k$ according to a linear model
\begin{equation}
\label{eq:linear_model}
    x_k = F x_{k-1} + e_k
\end{equation}
where $e_k$ is a sample of system noise distributed according to a Gaussian with zero mean and covariance matrix $\Sigma$. Thus, the single-target motion is modelled by the state transition density
\begin{equation}
    \bar{f}(x_k|x_{k-1}) = \mathcal{N}(x_k;\, F x_{k-1},\Sigma).
\end{equation}
To implement target's maneuvers based on the above model, we first propagate each particle $j$ according to:
\begin{equation}
\label{eq:linear_model_specific}
    x_{k,j}^{(\ell)} = F x_{k-1,j}^{(\ell)} + e_k.
\end{equation}
Due to the constant probability of survival, the particle weights remain the same.

To model target interactions, we note that the $g_{k|k-1}(x_{k}^{(\ell)}|x_{k-1}^{(\ell)},\psi_{k}^{(\ell)})$ merely affects the particle weights as an importance function -- see~equation~\eqref{eq:SMC_weights_interactive}. Consider $\hat{\bm{X}}_{k-1}^{(\ell)}$ to be the multi-target estimate inferred from LMB parameters at time $k-1$, i.e. from $\psi_{k}^{(\ell)}$. Note that $\hat{\bm{X}}_{k-1}^{(\ell)}$ includes all target state estimates except for the one with label $\ell$. A noise-free next state estimate of each single-target state estimate $\hat{\bm{x}}_{k-1}\in\hat{\bm{X}}_{k-1}$ can be calculated from~\eqref{eq:linear_model} as follows:
\begin{equation}
\label{eq:linear_model_noiseless}
    \bm{\hat{x}}_{k|k-1} = F \bm{\hat{x}}_{k-1}.
\end{equation}
Note that labels stay the same. We denote the ensemble of such predicted estimates by $\hat{\bm{X}}_{k|k-1}$. 

For each Bernoulli component in $\bm{\pi}_{k-1}$ with label $\ell$, we first compute a state estimate 
$$
\hat{x}_{k-1}^{(\ell)} = \sum_{j=1}^{J_{k-1}^{(\ell)}} \omega_{k-1,j}^{(\ell)} x_{k-1,j}^{(\ell)}.
$$ 
Then we find the closest distance from the above estimate to any element of the estimated target states in $\hat{\bm{X}}_{k-1}^{(\ell)}$,
\begin{equation}
\label{eq:dist}
    \hat{d}_{k-1}^{(\ell)} = \min_{\bm{x}\in \hat{\bm{X}}_{k-1}^{(\ell)}} \text{dist}(\hat{x}_{k-1}^{(\ell)}, \bm{x})
\end{equation}
where $\text{dist}(\hat{x}_{k-1}^{(\ell)}, \bm{x})$ is a distance measure. For example, if the Cartesian coordinates of the location of $\hat{x}_{k-1}^{(\ell)}$ and $\bm{x}$ are $[p_{\texttt{x}_{k-1}}^{(\ell)}\ p_{\texttt{y}_{k-1}}^{(\ell)}]^\top$ and $[p_\texttt{x}\ p_\texttt{y}]^\top$, respectively, then the Euclidean distance between the two is given by:
\begin{equation}
\label{eq:dist_estimates}
    \text{dist}(\hat{x}_{k-1}^{(\ell)}, \bm{x}) = \left[\left(p_{\texttt{x}_{k-1}}^{(\ell)} - p_\texttt{x}\right)^2+\left(p_{\texttt{y}_{k-1}}^{(\ell)} - p_\texttt{y}\right)^2\right]^{\frac{1}{2}}.
\end{equation}
We also denote the label of the closest target estimate by $\mathfrak{l}_{k}^{(\ell)}$, and the state estimate itself by $\hat{\bm{x}}^{(\mathfrak{l}_{k}^{(\ell)})}$. The predicted state estimate for this target is a member of $\hat{\bm{X}}_{k|k-1}$ given by
$$
\hat{\bm{x}}_{k|k-1}^{(\mathfrak{l}_{k}^{(\ell)})} = F\, \hat{\bm{x}}^{(\mathfrak{l}_{k}^{(\ell)})}.
$$
Since target interaction is stated as maintaining the distance from closest target, for each particle $x_{k,j}^{(\ell)}$, the closer its distance from $\hat{\bm{x}}_{k|k-1}^{(\mathfrak{l}_{k}^{(\ell)})}$ is to the distance $\hat{d}_{k-1}^{(\ell)}$, the more its weight should be increased. Therefore,
\begin{equation}
    g_{k|k-1}(x_{k,j}^{(\ell)}|x_{k-1}^{(\ell)},\psi_{k}^{(\ell)}) = \mathcal{N}(e_{k,j}^{(\ell)}; 0,\sigma_d^2)
    \label{eq:g-komaki}
\end{equation}
where
$$
e_{k,j}^{(\ell)} = \hat{d}_{k-1}^{(\ell)} - \text{dist}(x_{k,j}^{(\ell)}, \hat{\bm{x}}_{k|k-1}^{(\mathfrak{l}_{k}^{(\ell)})})
$$
and $\sigma_d$ is a user-defined parameter to decide how much change in distance to closest target from time $k-1$ to $k$ is acceptable.

\begin{figure}
    \centering
    \includegraphics[width=0.8\columnwidth]{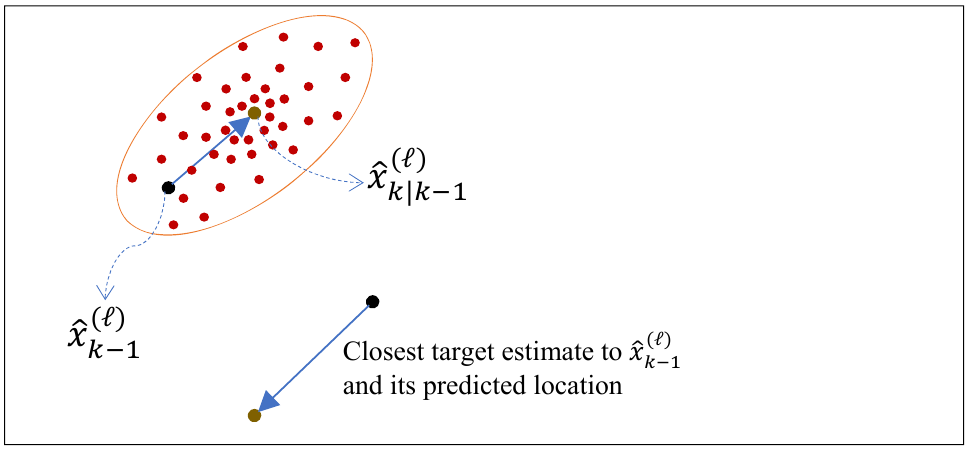}
    \\
    \small (a) \vspace{3mm}
    \\
    \includegraphics[width=0.8\columnwidth]{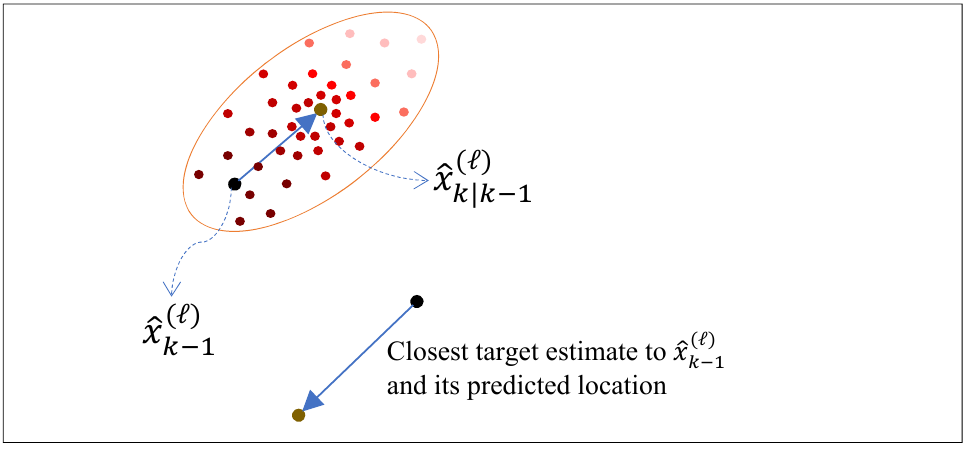}
    \\
    \small (b)
    \caption{An example of particle distributions after prediction for target with label $(\ell)$: (a) without any interactions incorporated, (b) with interactions incorporated.}
    \label{fig:particle_example}
\end{figure}

Figure~\ref{fig:particle_example} shows an example of how particles and their weights would be distributed. With no interactions incorporated into the filter, the particles will be distributed around the next predicted state of the target, all with the same weight, i.e. same shade of red in Fig.~\ref{fig:particle_example}(a). With the interaction-aware filter, however, the weights of the particles change. As the scenario in Fig.~\ref{fig:particle_example} shows, the movements of the closest target to target $(\ell)$ are in such a way that at time $k$, the target $(\ell)$ is more likely to stay at the same location to maintain its distance from the closest target, rather than move according to its motion model. Hence, in Fig.~\ref{fig:particle_example}(b), although the particles are distributed according to the motion model, those closer to the current location of the target $(\ell)$ are assigned larger weights (darker shade of red colour in the figure).

\subsection{Front-vehicle follower model}
Consider another application where numerous vehicles are to be tracked in a busy multi-lane road. The most prominent interaction between vehicles is that every target maintains its distance from the closest target \textit{in its front}. Incorporating this type of interaction is more complicated than interactions between targets in a coordinated swarm. 

Consider the car labeled \ding{172} in the scenario shown in Fig.~\ref{fig:vehicles_example}. Here, incorporating the interaction is not merely maintaining the shortest distance between car \ding{172} and all other cars but only cars that are travelling (i) in the same direction as \ding{172} and (ii) in front of it. This way, assuming that at time $k-1$ all the cars shown in the figure are detected and their labels and state estimates appear within $\hat{\bm{X}}_{k-1}$, care should be taken in the design of the $g_{k|k-1}(\cdot|\bm{x}_{k-1},\psi_k^{(\ell)})$ function to ensure that for car \ding{172}, the closest car (for its distance to be maintained) is not chosen as \ding{176} or \ding{177}, but \ding{173}.

\begin{figure}
    \centering
    \includegraphics[width=0.9\columnwidth]{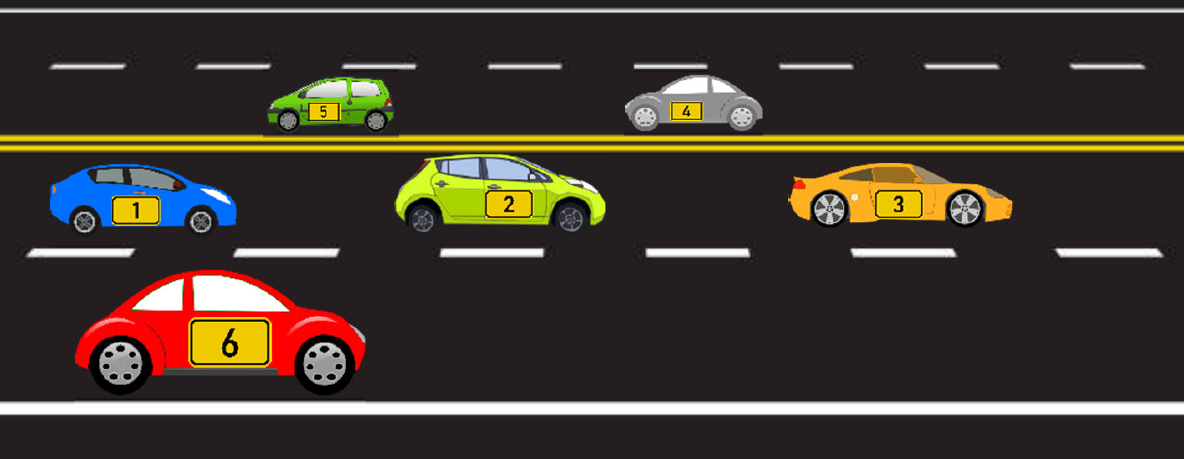}
    \caption{When considering interactions of vehicle 1 with other vehicles, the closest of only vehicles 2 and 3 are to be considered.}
    \label{fig:vehicles_example}
\end{figure}
In the following, we explain how such a $g_{k|k-1}(\cdot|\bm{x}_{k-1},\psi_k^{(\ell)})$ function can be mathematically devised. The main difference here is the consideration of target velocity direction in identifying the interacting targets. As with the coordinated swarm target case, we first propagate particles according to the linear model in~\eqref{eq:linear_model_specific}. For the purpose of this application, let us assume that the linear motion model $F$ is described by the nearly constant velocity (NCV) motion model. After particle propagation, for each unique target label, we identify if there is a possible interaction with a close front vehicle/target. If no such interaction is found, the predicted particle weights are calculated using ~\eqref{eq:SMC_weights_simpler}. However, if a possible interaction is identified, the particle weights are calculated according to  ~\eqref{eq:SMC_weights_interactive}. 

In order to determine if a target/vehicle with label $\ell$ is following another target/vehicle closely, we utilize the estimates from time $k-1$, $\hat{\bm{X}}_{k-1}$. According to the complexity of the road and the general speed of vehicles, we set a threshold on distance between vehicles $d_{\text{th}}$. If the distance is less than threshold for two vehicles, the vehicle in pursuit is expected to be impacted by the motion of its front vehicle. Therefore, we first calculate the distances between the estimated location of a target with label $\ell$, $\hat{x}_{k-1}^{(\ell)}$, and all other target estimates from time $k-1$,  $\hat{\bm{X}}_{k-1}^{(\ell)}$, as described in \eqref{eq:dist_estimates}. The set of targets for which the distance from target $\ell$ falls below $d_{\text{th}}$ are considered to be ``near'' target $\ell$, with their labels collected in the set $L_{\text{near}}^{(\ell)}$:
\begin{equation}
\label{eq:dist_near}    
    L_{\text{near}}^{(\ell)} \triangleq \left\{ \ell\in \mathbb{L}-\{\ell\} ;\ \  \text{dist}(\hat{x}_{k-1}^{(\ell)} , \bm{x}) \leq d_{\text{th}} \right\}.
\end{equation}
Let us denote the state estimates of near targets in $L_{\text{near}}^{(\ell)}$ by $\hat{\bm{X}}_{\text{near}}^{(\ell)}$ as a subset of $\hat{\bm{X}}_{k-1}^{(\ell)}$. Vehicles in the same lane are bound to be moving in the same direction. The use of NCV motion model indicates that each target state is comprised of the target location and velocity in $\texttt{x}$ and $\texttt{y}$ coordinates, 
\begin{equation}
\bm{x} = [p_\texttt{x}\ \ \dot{p}_\texttt{x}\ \ p_\texttt{y}\ \  \dot{p}_\texttt{y}]^\top.
\label{eq:single-target-state}
\end{equation}

The velocity components $\hat{v}_{\bm{x}} =[\dot{p}_\texttt{x}\ \ \dot{p}_\texttt{y}]^\top$ of the estimates can be used to find the angle between velocity vectors of the target $\ell$ and all the near vehicles with labels in $L_{\text{near}}^{(\ell)}$. If this angle is smaller than a threshold $\alpha_{\text{th}}$, the targets can be assumed to have similar direction of motion. This is shown in Fig.~\ref{fig:vehicles_velocity_angle} where the vehicles with labels $\ell$ and $\ell_{1}^{'}$ are moving in same direction hence the angle between their velocity vectors should be very small (even after consideration that noise makes the velocity direction seem slightly varying), as opposed to the angle between velocities of vehicles $\ell$ and $\ell_{2}^{'}$.
Accordingly, we may update the set of near targets to $\ell$, $L_{\text{near}}^{(\ell)}$, as follows:
$$
 L_{\text{near}}^{(\ell)} \leftarrow \left\{ \ell' \in L_{\text{near}}^{(\ell)};\ \measuredangle \left(\hat{v}_{x_{k-1}}^{(\ell)} , \hat{v}_{x_{k-1}}^{(\ell')}\right) \leq \alpha_{\text{th}}\right\}
$$
where
$$
\measuredangle \left(\hat{v}_{x_{k-1}}^{(\ell)} , \hat{v}_{x_{k-1}}^{(\ell')}\right) \triangleq \cos^{-1}
\left(\frac{
\left({\hat{v}_{x_{k-1}}^{(\ell)}}\right)^\top \hat{v}_{x_{k-1}}^{(\ell')}}
{
\left| \hat{v}_{x_{k-1}}^{(\ell)} \right|\ 
\left| \hat{v}_{x_{k-1}}^{(\ell')} \right|
}\right).
$$
\begin{figure}
    \centering
    \includegraphics[width=0.9\columnwidth]{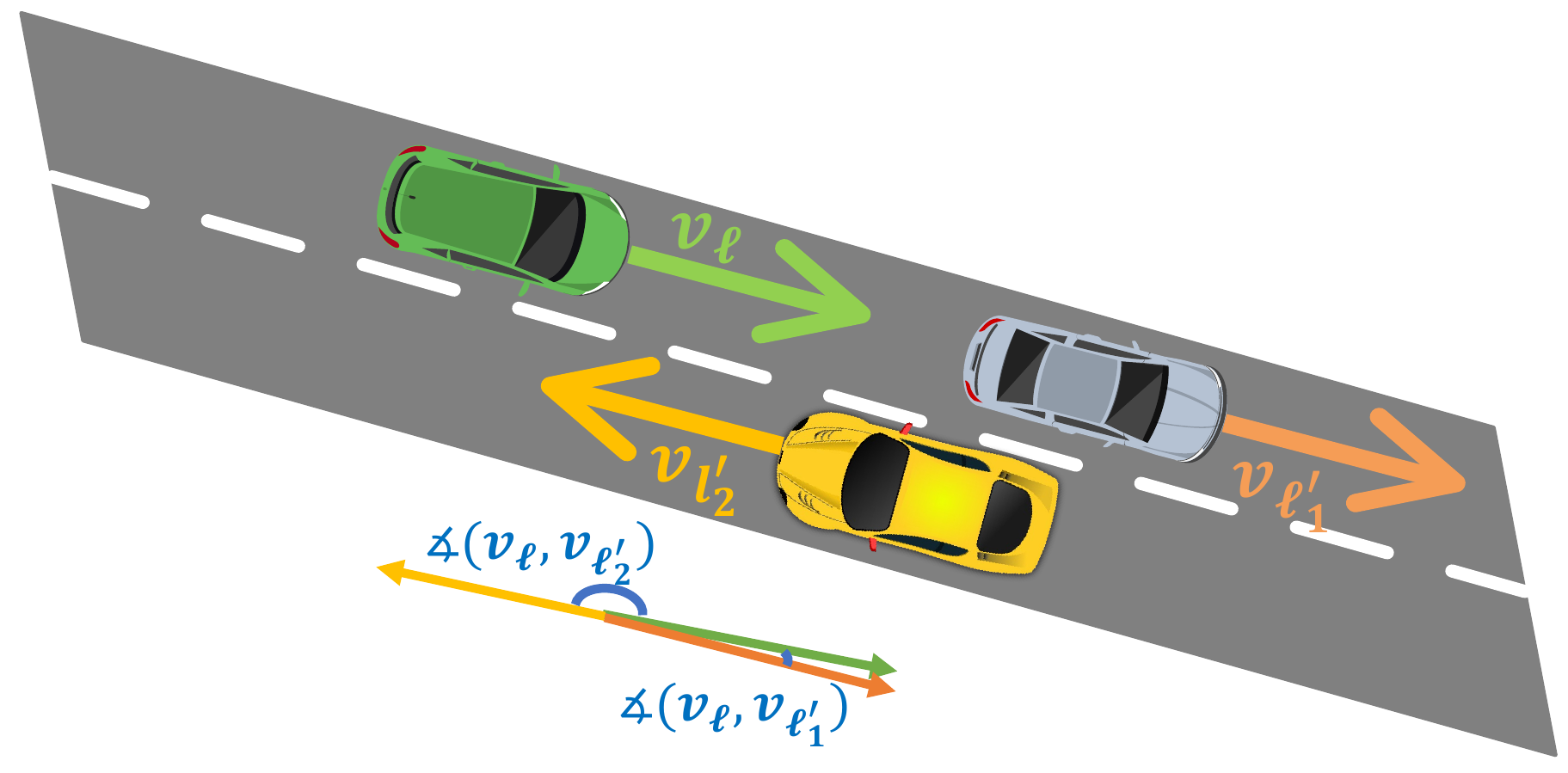}
    \caption{An example demonstrating how we can determine whether ``near'' vehicles are moving in the same or opposite direction by investigating the angle between velocity vectors.}
    \label{fig:vehicles_velocity_angle}
\end{figure}

The above condition excludes all vehicles travelling in a direction different from vehicle $\ell$ from possibly interacting targets. However, it may still include a vehicle travelling closely behind vehicle $\ell$. Therefore, we check the position vector of each vehicle in the updated set of ``near'' labels, relative to vehicle $\ell$: 
\begin{equation}
    \hat{p}^{(\ell\rightarrow \ell')} \triangleq \left[
    \left(\hat{p}_{\texttt{x}_{k-1}^{(\ell')}} - \hat{p}_{\texttt{x}_{k-1}^{(\ell)}}\right)
    \ \ \ \ 
    \left(\hat{p}_{\texttt{y}_{k-1}^{(\ell')}} - \hat{p}_{\texttt{y}_{k-1}^{(\ell)}}\right)
    \right]^\top
\end{equation}
and investigate whether it is in around the same direction as the vehicle velocity vector ${\hat{v}_{x_{k-1}}^{(\ell)}}$ or not. This is simply done by comparing the x and y components of the position of vehicles $\ell$ and $\ell'$, according to the x and y components of ${\hat{v}_{x_{k-1}}^{(\ell)}}$, respectively. Furthermore, we ascertain if a ''near'' vehicle is actually moving in front of the vehicle $\ell$ and not behind it. Figure~\ref{fig:front_or_beihnd} shows an example to demonstrate how we run this check. Consider the two ``near'' vehicles $\ell'_1$ and $\ell'_2$. For $\ell'_2$ which is behind the target vehicle $\ell$, the angle between its relative position vector $\hat{p}^{(\ell\rightarrow \ell'_2)}$ and the target vehicle velocity vector $\hat{v}_{x_{k-1}}^{(\ell)}$ is very large (close to 180$^\circ$), while this angle is small for the vehicle $\ell'_2$ that is in front of $\ell$. Thus, we can shrink the set of ``near'' vehicle labels further as follows:

$$
 L_{\text{near}}^{(\ell)} \leftarrow \left\{ \ell' \in L_{\text{near}}^{(\ell)};\ \measuredangle \left(\hat{v}_{x_{k-1}}^{(\ell)} , \hat{p}^{(\ell\rightarrow \ell')}\right) \leq \beta_{\text{th}}\right\}
$$
where $\beta_{\text{th}}$ is a threshold. 

\begin{figure}
    \centering
    \includegraphics[width=0.99\columnwidth]{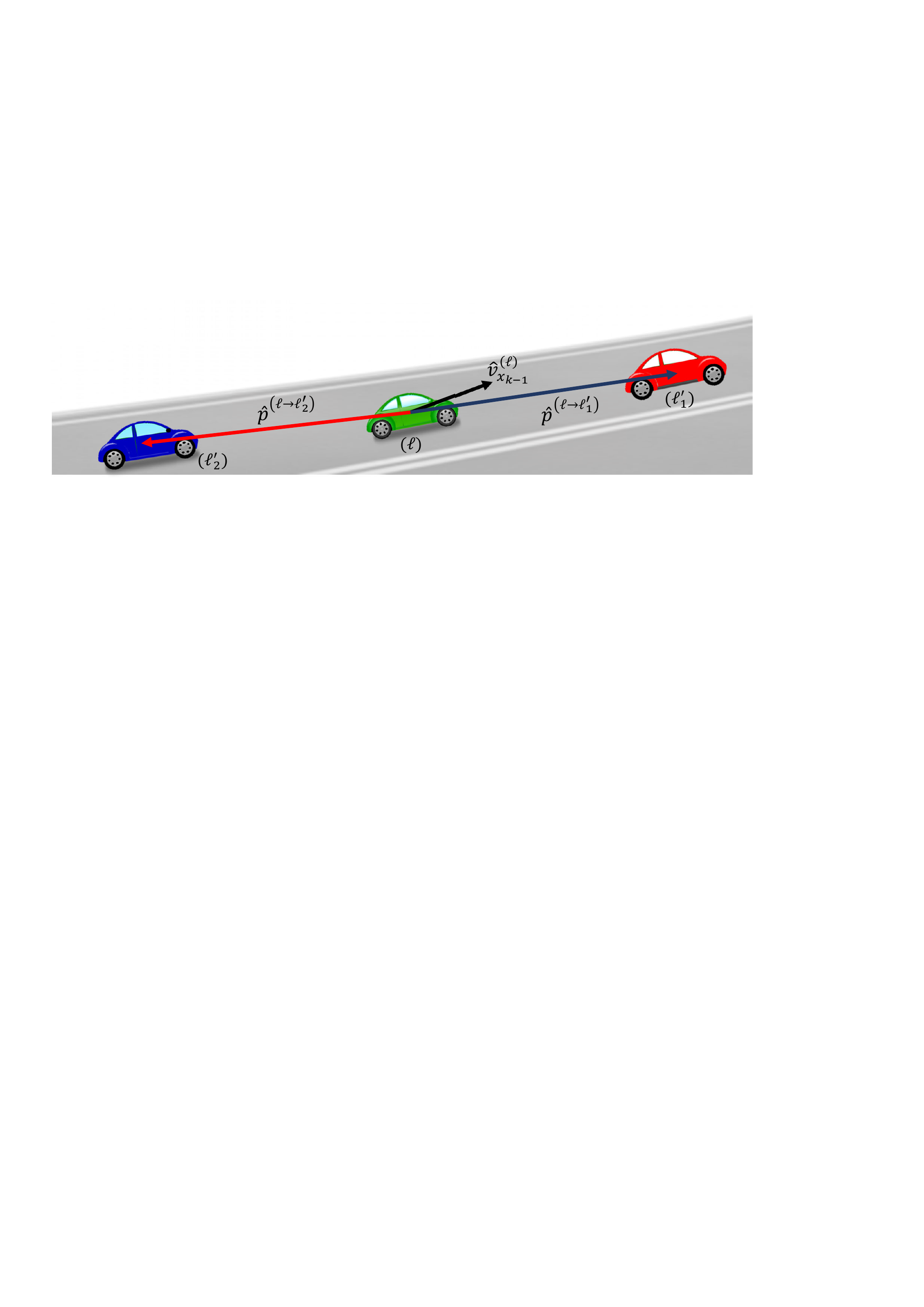}
    \caption{An example demonstrating how we determine whether a ``near'' vehicle is in front or behind by investigating the angle between location and velocity vectors.}
    \label{fig:front_or_beihnd}
\end{figure}

The last step is to find the closest vehicle that is still included as a ``near'' vehicle in $L_{\text{near}}^{(\ell)}$ (i.e. the vehicle that is driven right in front of $\ell$ and is not too far from it, hence interacting with it):
\begin{equation}
\label{eq:min_dist_front_vehicle}
    \mathfrak{l}_{k}^{(\ell)} = \underset{\ell'\in L_{\text{near}}^{(\ell)}}{\arg\min}\ \ \text{dist}(\hat{x}_{k-1}^{(\ell)}, \hat{x}_{k-1}^{(\ell')}).
\end{equation}
Once the label of the interacting target, $\mathfrak{l}_{k}^{(\ell)}$, is found, similar to section~\ref{sec:smc_implementation}.A, first its distance is calculated according to $\hat{d}_{k-1}^{(\ell)} = \text{dist}(\hat{x}_{k-1}^{(\ell)}, \hat{x}_{k-1}^{(\mathfrak{l}_{k}^{(\ell)})})$, then its estimate is propagated to time $k$ via $\hat{\bm{x}}_{k|k-1}^{(\mathfrak{l}_{k}^{(\ell)})} = F\, \hat{\bm{x}}^{(\mathfrak{l}_{k}^{(\ell)})}$ then for each particle $x_{k,j}^{(\ell)}$, the error term 
$e_{k,j}^{(\ell)} = \hat{d}_{k-1}^{(\ell)} - \text{dist}(x_{k,j}^{(\ell)}, \hat{\bm{x}}_{k|k-1}^{(\mathfrak{l}_{k}^{(\ell)})})$ is calculated and used within the density term in equation~\eqref{eq:g-komaki}, which is finally used to increase or decrease the weights of those particles in effect of interaction.
\section{Experimental Results}
\label{sec:results}
The proposed method has been tested on a complex road intersection scenario, where interactions between targets are identified using the steps outlined in Section \ref{sec:smc_implementation}.B.  An aerial video of the Swindon M4 motorway junction (J-16) in the U.K has been selected, which is shot by a high functioning drone camera (DJI inspire 2). A sample image from the dataset is shown in Fig. ~\ref{fig:dataset_scenario}. The dataset contains a large number of vehicles at any time, usually more than 60 vehicles per frame. The intersection has different challenges like heavy, fast traffic as well as traffic lights which control the motion of vehicles on the road. The size of the target vehicles is very small in the images, making it a challenging tracking task. The vehicles undergo various maneuvers like varying speeds, turns and lane changes. The proposed method is capable of identifying all the mentioned maneuvers, rather than ignoring some maneuvers for simplicity. Furthermore, we identify target interactions based on distance between two vehicles, without using any dataset-specific information. A total of 245 consecutive frames have been selected from the video for our implementation. The camera is assumed to be stationary for all images. The ground truth for 200 vehicles has been manually annotated for the dataset as part of this research, using the MATLAB's ground truth labeler application. The measurements have been obtained by applying a noisy detector to the ground truth. The LMB filter has been used for tracking where there is no interaction, and the interactive weight update from ~\eqref{eq:SMC_weights_interactive} has been implemented for vehicles with identified interactions. In addition to the performance comparison discussed in this section, a results video is provided as supplementary material to this paper. The video shows each frame of the dataset at a frame rate of 2 frames per second (fps). For each frame, ground truth has been represented with green dots on the centre point of the target vehicle and estimates are represented with red boxes around the vehicles, with target labels written in red to identify any possible label switching. 

As the estimated target velocities are used for identifying any possible interactions, the interaction model is implemented after $5$ time frames so that the estimated velocities have been well adjusted according to the measurement and motion models. The NCV motion model is used in which the state of each target is described according equation~\eqref{eq:single-target-state}. In this model, the single-target state transition density is given by:
\begin{equation}
		f\left( x_k | x_{k-1} \right) = \mathcal{N} \left( x_k ; F x_{k-1}, Q \right)  
	\end{equation}
	where $F$ is the state transition matrix and $Q$ is the process noise covariance matrix:
	\begin{equation}
		F = \text{diag}\left(A,A\right) \qquad Q = \text{diag}\left(B,B\right) 
	\end{equation}
and $A \text{ and } B$ are matrices given by:
	\begin{equation}
		A = 
		\begin{bmatrix}
			1 & \quad T \\
			0 & \quad 1
		\end{bmatrix} \qquad B = \sigma ^2_{\text{motion}}
		\begin{bmatrix}
			T^3 / 3 & \quad T^2 / 2 \\
			T^2 / 2 & \quad T
		\end{bmatrix}
	\end{equation}
in which $T$ is the time step, which is set equal to $1$\;s. The $\sigma_{motion}^2$ is 7\;pixels/s$^2$. The probability of target detection $P_d$ is $0.995$ and probability of survival $P_s$ is $0.99$. The observation noise is a zero mean Gaussian with variance 3\;pixels/s$^2$. The target birth is generated at all entry points on the road in the scenario, as well as some central points at the intersection. The birth model used is an LMB~RFS according to $\{ (r_{B}^{(\ell)} , p_{B}^{(\ell)}) \}_{\ell \in \mathbb{B}}$ where $r_{B}^{(\ell)} = 0.2$ for all birth components.

\begin{figure}
    \centering
    \includegraphics[width=0.9\columnwidth]{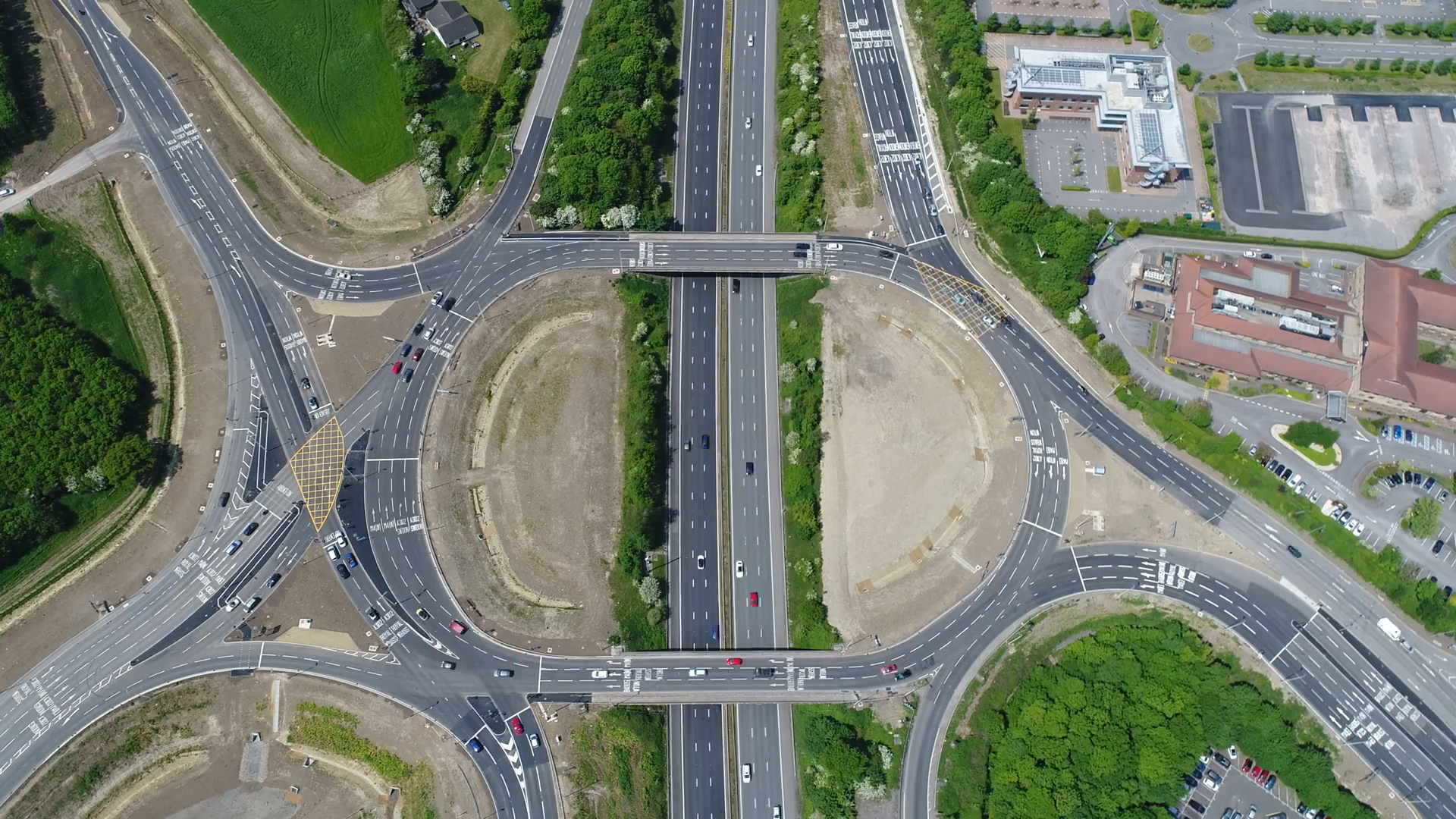}
    \caption{A sample image from the selected dataset}
    \label{fig:dataset_scenario}
\end{figure}

The performance of the proposed interaction-aware tracking method has been compared with the LMB filter, using the same parameters described above. Both methods use number of particles $N=200$, and tracks with probabilities of existence lower than $ 10^{-4}$ have been pruned. Owing to a large number of vehicles present in each frame and the size of each target vehicle being very small, interaction identification proved to be very challenging. Therefore, in the results presented, we have not implemented the step presented in Fig. ~\ref{fig:front_or_beihnd} for interaction identification, as it resulted in some adjacent lane vehicles identified as interacting with each other.  Performance evaluation has been conducted using a number of metrics, first of which is the optimal sub-pattern assignment (OSPA). The OSPA metric measures the error between ground truth and estimated tracks. It has three components, the overall OSPA error, the localisation error and cardinality error. The OSPA metric has two parameters, i.e. the cut-off for error $c$, which has been set to $100$ and the order $p$, which has been set to $2$ for the purpose of performance evaluation. It can be seen from Fig.~\ref{fig:OSPA_comparison} that the proposed method performs better than the LMB filter for most of the time steps. Since most of the error values are seen for the overall OSPA error and the OSPA cardinality, we have calculated an average error difference for these for the compared methods. The overall average of OSPA error difference is calculated to be $1.666$ and OSPA-cardinality error as $2.108$.
\begin{figure*}
    \centering
    \includegraphics[width=0.75\textwidth]{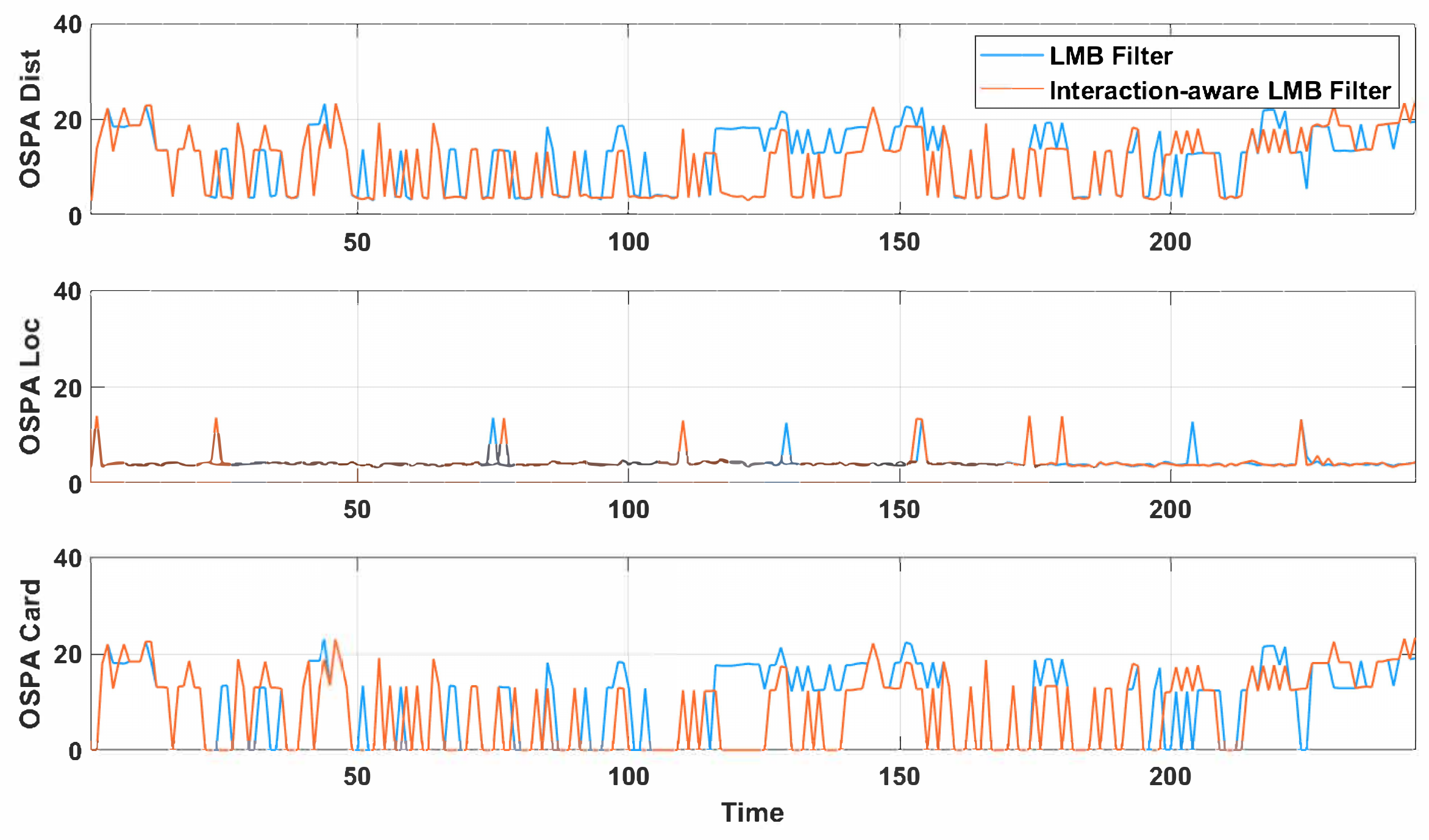}
    \caption{A comparison of OSPA metric for LMB filter and the proposed method}
    \label{fig:OSPA_comparison}
\end{figure*}
Another commonly used metric for evaluation of tracking performance is the $\text{OSPA}^{(2)}$, which calculates cumulative error over a specific window length. In addition to the cut-off and order parameters, which have been kept same as OSPA metric for consistency, the $\text{OSPA}^{(2)}$ window length has been set to $5$. It can be seen from the $\text{OSPA}^{(2)}$ plots in Fig.~\ref{fig:OSPA2_comparison} that the proposed method outperforms the LMB filter. The overall average error difference for LMB and proposed method for $\text{OSPA}^{(2)}$ is $0.195$ and for $\text{OSPA}^{(2)}$ cardinality is $-0.115$. 
\begin{figure*}
    \centering
    \includegraphics[width=0.75\textwidth]{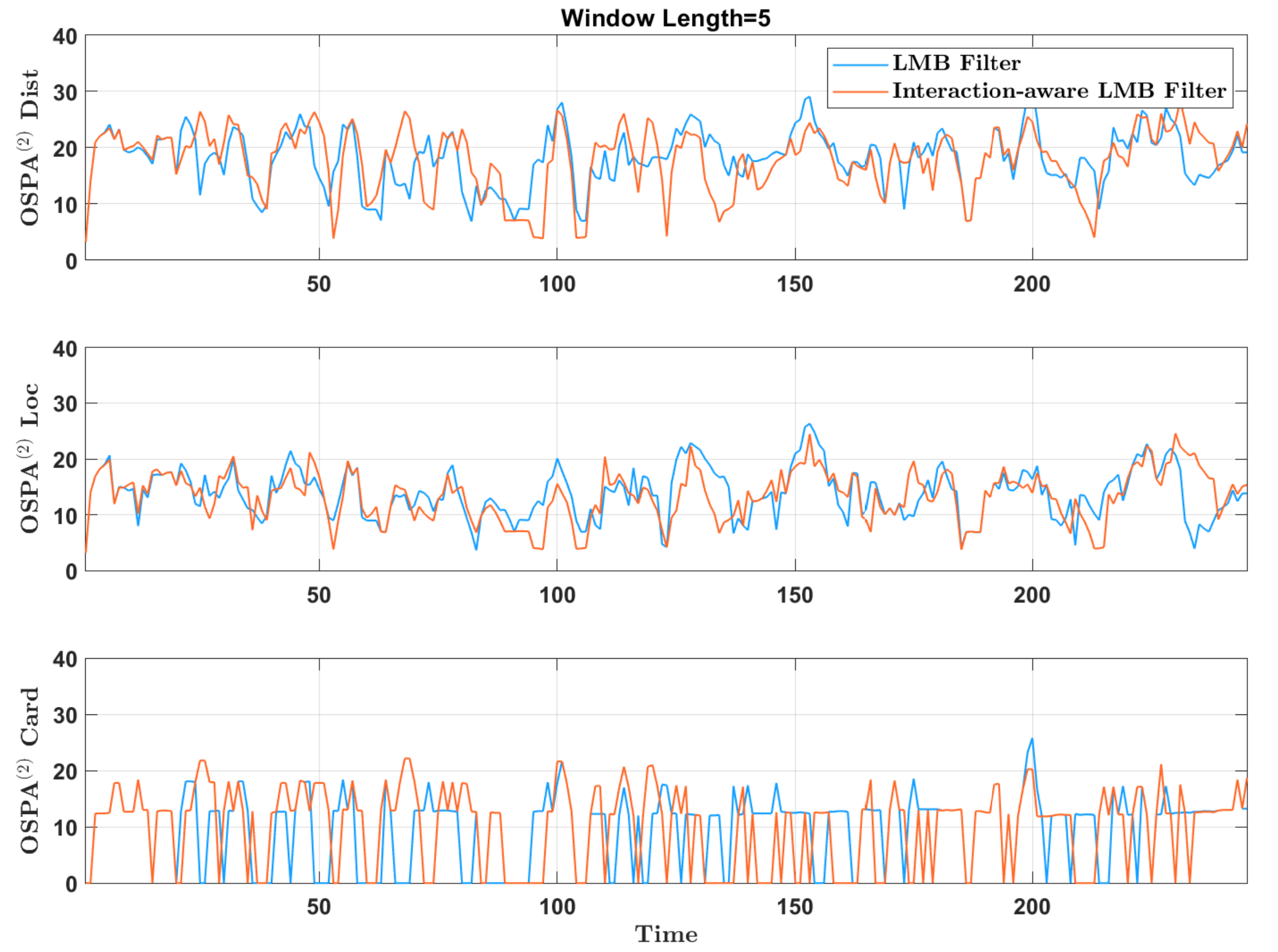}
    \caption{A comparison of OSPA-2 metric for LMB filter and the proposed method}
    \label{fig:OSPA2_comparison}
\end{figure*}
In order to further depict that interaction-aware method works well, we have plotted a comparison of the cardinality error for all time frames, where the cardinality error has been calculated as:
$$
\text{Card. Error} = \text{Ground Truth Card.} - \text{Estimated Card.}
$$

\begin{figure}
    \centering
    \includegraphics[angle=270,width=0.99\columnwidth]{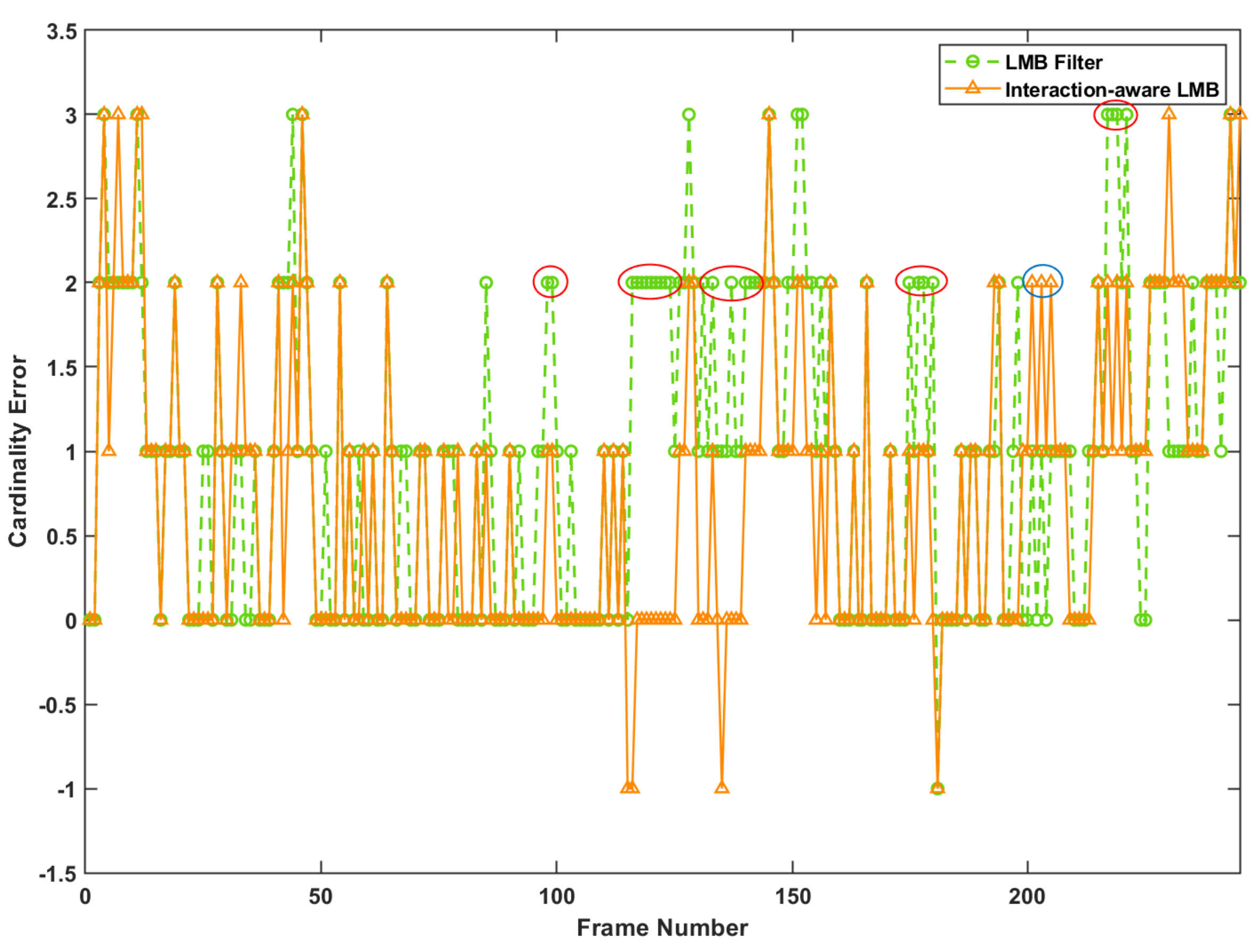}
    \caption{A comparison of cardinality error for LMB filter and the proposed method}
    \label{fig:Cardinality_error_comparison}
\end{figure}

The cardinality errors for LMB filter and the proposed method are depicted in Fig.~\ref{fig:Cardinality_error_comparison}. The portions of the plot highlighted with red ellipses show examples of the case when the cardinality error for LMB filter is higher than the proposed method, while the blue color ellipses show samples of cases where our method has a higher cardinality error. It can be clearly seen that our proposed interaction-aware LMB filter returns a more accurate cardinality than the traditional LMB filter in a large portion of the time frames.  The cardinality error highlights the fact that while the LMB filter sometimes loses a track (which may be picked up again by the filter at a later time), our method is far less prone to this issue. The interaction-aware LMB filter has over-estimation for a few frames, depicted by negative cardinality error. However, it should be noted that the LMB filter has a greater cardinality error for the set of measurements used than the interaction-aware LMB filter. Overall, it can be seen from the metrics shown above that the interaction-aware LMB filter has generally improved performance even for the selected scenario, which is quite complex due to the large and varying number of targets, the number of maneuvers and the varying speeds of targets in different areas of the scene. The performance difference is highlighted most in terms of cardinality error.

\section{Conclusion}
\label{sec:conc}
A novel approach was presented to incorporate target interactions into the prediction step of a RFS-based multi-target filter. The filter of choice in this paper is the LMB filter, but the proposed approach can be directly formulated into other similar filters such as the $\delta$-GLMB filter. The main idea was to incorporate interactions in the form of extra parameters involved in the single-target state transition density, with the parameters being time-varying, chosen as parts of the multi-target density information from the previous time step. 

In two practical examples, we elaborated further on how target interactions can be incorporated into the SMC implementation of the LMB filter, resulting in particle weight changes in the prediction step of the filter. We presented a challenging tracking scenario in which a large number of vehicles are moving and interacting in a complex multi-way intersection. The results demonstrated how the incorporation of interaction into the filter improves the tracking results in terms of both the OSPA and OSPA$^{(2)}$ error metrics. 

In general, this paper clearly demonstrates how the incorporation of information into the Bayesian filtering process can improve the estimation of states and labels of multiple targets in ITS applications. In essence, the RFS filters were originally invented with this purpose in mind. They provided a mathematically solid way to incorporate all target- and scene-related information into the prediction step and all measurement-related information into the update step of the filter. The motion model, the probability of survival, and the birth model are examples of the target- and scene-related information. The likelihood function, the probability of detection and the clutter model are examples of measurement-related information. In our work, we added the information that are available regarding interactions between targets (emphasizing that they are target-related information) into the prediction step of the most advanced RFS filter. This could be further enhanced by incorporating more information such as the road information (scene-related) into the prediction step as well. Indeed, we are currently working on this as a future area of further research.

\section*{Acknowledgment}

This work was supported by the Australian Research Council through Grant DE210101181.

% Can use something like this to put references on a page
% by themselves when using endfloat and the captionsoff option.
\ifCLASSOPTIONcaptionsoff
  \newpage
\fi

% trigger a \newpage just before the given reference
% number - used to balance the columns on the last page
% adjust value as needed - may need to be readjusted if
% the document is modified later
%\IEEEtriggeratref{8}
% The "triggered" command can be changed if desired:
%\IEEEtriggercmd{\enlargethispage{-5in}}

% references section

% can use a bibliography generated by BibTeX as a .bbl file
% BibTeX documentation can be easily obtained at:
% http://mirror.ctan.org/biblio/bibtex/contrib/doc/
% The IEEEtran BibTeX style support page is at:
% http://www.michaelshell.org/tex/ieeetran/bibtex/
%\bibliographystyle{IEEEtran}
% argument is your BibTeX string definitions and bibliography database(s)
%\bibliography{IEEEabrv,../bib/paper}
%
% <OR> manually copy in the resultant .bbl file
% set second argument of \begin to the number of references
% (used to reserve space for the reference number labels box)

\bibliographystyle{IEEEtran}
\bibliography{refference}

% biography section
% 
% If you have an EPS/PDF photo (graphicx package needed) extra braces are
% needed around the contents of the optional argument to biography to prevent
% the LaTeX parser from getting confused when it sees the complicated
% \includegraphics command within an optional argument. (You could create
% your own custom macro containing the \includegraphics command to make things
% simpler here.)
%\begin{IEEEbiography}[{\includegraphics[width=1in,height=1.25in,clip,keepaspectratio]{mshell}}]{Michael Shell}
%Biography text here.
%\end{IEEEbiography}

% You can push biographies down or up by placing
% a \vfill before or after them. The appropriate
% use of \vfill depends on what kind of text is
% on the last page and whether or not the columns
% are being equalized.

%\vfill

% Can be used to pull up biographies so that the bottom of the last one
% is flush with the other column.
%\enlargethispage{-5in}

% that's all folks
\newpage
 \begin{IEEEbiography}[{\includegraphics[width=1in,height=1.25in,clip,keepaspectratio]{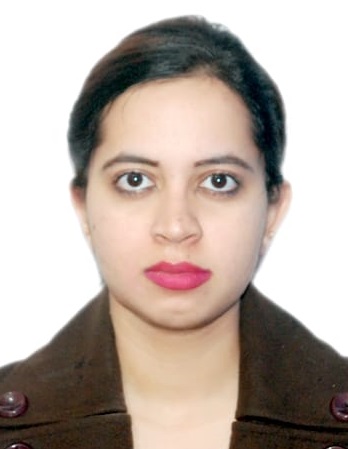}}]{Nida Ishtiaq}  completed her bachelor’s and masters’ degrees in Electrical Engineering from National University of Sciences and Technology (NUST) CEME, Rawalpindi, Pakistan, in 2013 and 2016, respectively. She has worked in research and academia in reputable universities in Pakistan for about 3 years. She is currently undertaking Ph.D. studies at RMIT University, Melbourne, Australia since 2019. Her research interests include signal processing and random-finite-set based multi-object tracking with special focus on vehicle tracking.
 \end{IEEEbiography}
 \begin{IEEEbiography}[{\includegraphics[width=1in,height=1.25in,clip,keepaspectratio]{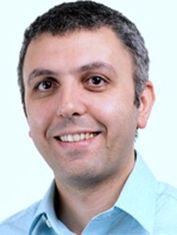}}]{Amirali Khodadadian Gostar} received his BSc degree in Electrical Engineering and MSc degree in Philosophy of Science, PhD degree in Mechatronics Engineering from RMIT University. He is currently a lecturer and ARC DECRA fellow in the School of Engineering at RMIT University. His research interests include sensor management, data fusion, and multitarget tracking.
 \end{IEEEbiography}
\begin{IEEEbiography}[{\includegraphics[width=1in,height=1.25in,clip,keepaspectratio]{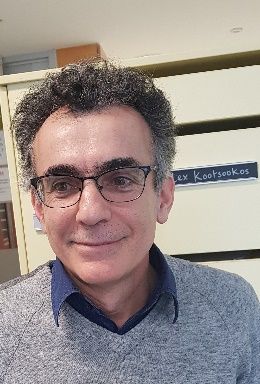}}]{Alireza Bab-Hadiashar} received his BSc and MEng in Mechanical
 Engineering, then PhD in Robotics from Monash University. He has
 held various positions in Monash University, Swinburne University of
 Technology and RMIT University where he is currently a professor of
 mechatronics and leads the intelligent automation research group.
 His main area of research interest is intelligent automation in general,
 and robust data fitting in machine vision, deep learning for detection
 and identification, and robust data segmentation, in particular
\end{IEEEbiography}
\begin{IEEEbiography}[{\includegraphics[width=1in,height=1.25in,clip,keepaspectratio]{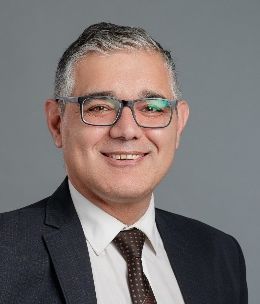}}]{Reza Hoseinnezhad} received his BSc, MSc, and PhD in Electrical
Engineering from University of Tehran, Iran, in 1994, 1996 and
2002, respectively. He has held various positions at University of
Tehran, Swinburne University of Technology, The University of
 Melbourne, and RMIT University where he has worked since 2010
 and is currently a Professor and Research Development Lead as
 well as the Discipline Leader (Manufacturing and Mechatronics) at
 School of Engineering. His main areas of research interest are
 statistical information fusion, random finite sets, multi-object
 tracking, deep learning, and robust multi-structure data fitting in
 computer vision.
 \end{IEEEbiography}
%\EOD
\end{document}